\documentclass[aps,pra,showpacs,twocolumn,amsmath,superscriptaddress]{revtex4-2}
\usepackage{color}
\usepackage{graphicx,epsfig,subfigure,dsfont,amssymb,amsmath,amsthm,amsfonts,amsbsy,mathrsfs,amscd}
\usepackage{bm}
\usepackage{bbm}
\usepackage{contour}
\usepackage{latexsym}
\usepackage{amssymb} 
\usepackage{amsthm} 
\usepackage{times} 
\usepackage{graphicx}

\newcommand{\aop}{Ann. Phys.~}

\newcommand{\jpa}{J. Phys. A: Math. Theor.~}

\newcommand{\njp}{New. J. Phys.~}

\newcommand{\pla}{Phys. Lett. A~}

\newcommand{\tinyspace}{\mspace{1mu}}

\newcommand{\op}[1]{\operatorname{#1}}

\newcommand{\norm}[1]{\left\lVert\tinyspace #1 \tinyspace\right\rVert}

\renewcommand{\det}{\operatorname{det}}
\renewcommand{\t}{{\scriptscriptstyle\mathsf{T}}}

\newcommand{\rank}{\op{rank}}

\def\sl{\mathfrak{sl}}

\def \diag {\mathrm{diag}}

\def\I{\mathbb{1}}

\def\I{\mathbbm{1}}

\def\1{\mathbf{1}}
\newcommand{\proj}[1]{| #1\rangle\!\langle #1 |}

\def\ot{\otimes}

\newcommand{\out}[2]{| #1\rangle\langle #2 |}
\newcommand{\Inner}[2]{\left\langle #1 , #2\right\rangle}
\newcommand{\Innerm}[3]{\left\langle #1 \left| #2 \right| #3 \right\rangle}

\newcommand{\Herm}{\mathrm{Herm}}

\newcommand{\pa}[1]{(#1)}
\newcommand{\Pa}[1]{\left(#1\right)}

\newcommand{\Br}[1]{\left[#1\right]}
\newcommand{\set}[1]{\{#1\}}
\newcommand{\Set}[1]{\left\{#1\right\}}

\newcommand{\bra}[1]{\langle#1|}

\newcommand{\ket}[1]{|#1\rangle}

\DeclareMathOperator{\trace}{Tr}
\newcommand{\ptr}[2]{\trace_{#1}\pa{#2}}
\newcommand{\Ptr}[2]{\trace_{#1}\Pa{#2}}

\newcommand{\Tr}[1]{\Ptr{}{#1}}

\newcommand{\Abs}[1]{\left|\tinyspace#1\tinyspace\right|}

\def\cA{\mathcal{A}}\def\cB{\mathcal{B}}\def\cE{\mathcal{E}}
\def\cF{\mathcal{F}}\def\cH{\mathcal{H}}
\def\cK{\mathcal{K}}\def\cO{\mathcal{O}}
\def\cR{\mathcal{R}}\def\cS{\mathcal{S}}
\def\cW{\mathcal{W}}

\def\bbC{\mathbb{C}}

\def\bbR{\mathbb{R}}

\def\bsA{\boldsymbol{A}}\def\bsB{\boldsymbol{B}}\def\bsC{\boldsymbol{C}}
\def\bsF{\boldsymbol{F}}\def\bsH{\boldsymbol{H}}
\def\bsK{\boldsymbol{K}}\def\bsL{\boldsymbol{L}}\def\bsM{\boldsymbol{M}}\def\bsO{\boldsymbol{O}}
\def\bsS{\boldsymbol{S}}\def\bsT{\boldsymbol{T}}
\def\bsU{\boldsymbol{U}}\def\bsV{\boldsymbol{V}}\def\bsW{\boldsymbol{W}}\def\bsX{\boldsymbol{X}}\def\bsY{\boldsymbol{Y}}
\def\bsZ{\boldsymbol{Z}}

\def\bsa{\boldsymbol{a}}\def\bsb{\boldsymbol{b}}

\def\bsx{\boldsymbol{x}}\def\bsy{\boldsymbol{y}}

\def\rD{\mathrm{D}}

\def\rO{\mathrm{O}}
\def\rS{\mathrm{S}}
\def\rU{\mathrm{U}}

\def\sC{\mathscr{C}}

\def\B{\textsf{B}}

\def\U{\textsf{U}}

\newtheorem{thrm}{Theorem}
\newtheorem{lem}{Lemma}
\newtheorem{prop}{Proposition}
\newtheorem{cor}{Corollary}
\theoremstyle{definition}
\newtheorem{definition}{Definition}

\newtheorem{exam}{Example}

\begin{document}


\title{Bargmann Invariants Are Insufficient for Complete Local-Unitary Orbit Discrimination}


\author{Lin Zhang}
\email{Contact author: godyalin@163.com}
\author{Bing Xie}
\email{E-mail: xiebingjiangxi2023@163.com}
\affiliation{School of Mathematical Sciences, Hangzhou Dianzi
University, Hangzhou 310018, China}


\date{\today}

\begin{abstract}
Bargmann invariants constructed from a bipartite density operator and its two lifted marginals are polynomial invariants of local-unitary conjugation. We determine the precise information encoded in these invariants. Whenever one subsystem is a qubit, the ordinary marginal-word family determines the full spectrum of the partial transpose and hence decides whether the state has the positive-partial-transpose property. In $2\ot2$ and $2\ot 3$ systems, this yields complete separability criteria. For the two-qubit system, a finite subfamily additionally separates local-unitary orbits, and a finite extension generates the polynomial invariant ring. These three tasks already diverge for qubit-qutrit states: we exhibit full-rank, locally maximally mixed states that agree on all ordinary marginal-word invariants yet have different operator-Schmidt ranks, together with a quartic correlation invariant that separates them. When both local dimensions are at least three, the analogous collapse on the locally maximally mixed sector produces isospectral pairs consisting of one separable state and one entangled state with negative partial transpose. The ordinary Bargmann algebra therefore coincides with the full local-unitary invariant ring if and only if both subsystems are qubits. The missing data are geometric: they encode the placement of global eigenspaces relative to the tensor-product decomposition.
\end{abstract}

\maketitle

\section{Introduction}

The classification of composite quantum states under local changes of basis is a fundamental problem in quantum information theory   \cite{Grassl1998,Makhlin2002,Kraus2009,Zhou2012,Gour2013,Turner2017}. Let $\rho_{AB},\sigma_{AB}\in\rD(\bbC^{d_A}\ot\bbC^{d_B})$, the set of all bipartite density operators acting on $\bbC^{d_A}\ot\bbC^{d_B}$. They are locally unitarily equivalent if there exist unitaries $\bsU_A \in \rU(d_A)$ and $\bsU_B \in \rU(d_B)$ such that
\begin{eqnarray*}
\sigma_{AB} = (\bsU_A \ot\bsU_B)\rho_{AB}(\bsU_A \ot\bsU_B)^\dagger.
\end{eqnarray*}
Local-unitary (LU) equivalence identifies states that differ only by independent choices of basis on the two subsystems. It therefore preserves all intrinsic bipartite properties, including separability, entanglement measures, marginal spectra, operator-Schmidt coefficients, and correlation-tensor invariants \cite{Maciazek2013,Viehmann2011,Monras2011,Wyderka2023}.

This relation is strictly finer than global unitary equivalence. Two finite-dimensional Hermitian operators are globally unitarily equivalent if and only if they have the same eigenvalues with the same multiplicities. A general  $\bsW \in \rU(d_A d_B)$ need not respect the tensor-product decomposition and may send a separable state to an entangled one, or conversely. Any complete local-unitary classification must therefore retain not only the eigenvalues of $\rho_{AB}$ but also the position of its eigenspaces relative to the underlying bipartite structure.

Polynomial invariants provide a natural language for the problem. Because the local unitary group $\rU(d_A)\times \rU(d_B)$ is compact, its polynomial invariants separate distinct local-unitary orbits. The complete invariant ring is consequently sufficient in principle, but an explicit finite generating or separating family is difficult to identify, and the complexity grows rapidly with the local dimensions. Trace invariants are especially attractive: they are basis independent, algebraically natural, and compatible with multivariate trace-estimation protocols \cite{Oszmaniec2024,Quek2024}. Their role is closely related to the classical invariant theory of matrix tuples \cite{Procesi1976}.

Bargmann invariants were originally introduced in the study of projective unitary transformations \cite{Bargmann1964,Chien2016} and later became central to the theory of geometric phases \cite{Mukunda2003,Berry1984,Aharonov1987}. Cyclic trace products also appear in Kirkwood–Dirac representations \cite{Kirkwood1933,Dirac1945,Schmid2024}, relational measurements, and quantum information protocols \cite{Buhrman2001}.  Recent work has connected Bargmann-type quantities with imaginarity, coherence, and local-unitary discrimination \cite{Fernandes2024,Li2025,Li2026,Wagner2025,Zhang2025PRA1,Zhang2025,Pratapsi2025,Xu2026,Zhang2026,Wang2026,Wang2026,Wagner2026a,Wagner2026b}.

The present paper isolates a particularly canonical family and determines its exact limitations. For a bipartite state $\rho_{AB}$ with marginals $\rho_A = \ptr{B}{\rho_{AB}}$ and $\rho_B = \ptr{A}{\rho_{AB}}$, define the operator triple
\begin{eqnarray*}
\bsX_0 := \rho_{AB},  \bsX_1:= \rho_A \ot  \I_B,  \bsX_2:=\I_A \ot \rho_B.
\end{eqnarray*}
For every finite word $\mathbf{w} = i_1 i_2 \dots i_m$ with $i_k \in \set{0,1,2}$, the associated ordinary marginal-word Bargmann invariant is
\begin{eqnarray}\label{eq:Bargmann}
B_{\mathbf{w}}(\rho) := \Tr{\bsX_{i_1} \bsX_{i_2} \dots \bsX_{i_m}}.
\end{eqnarray}
Under a local unitary transformation, all three operators $\bsX_k(k=0,1,2)$ undergo simultaneous conjugation by $\bsU_A\ot\bsU_B$, so every $B_{\mathbf{w}}$ is LU invariant. Cyclicity of the trace gives $B_{i_1i_2\cdots i_m}=B_{i_2\cdots i_mi_1}$; further polynomial relations may appear in fixed dimensions.

The adjective “ordinary” is essential. Our negative results concern precisely the trace-word algebra $\bbC[B_{\mathbf{w}}:\mathbf{w}\in\set{0,1,2}^*]$ generated by the canonical triple $(\bsX_0,\bsX_1,\bsX_2)$. They do not exclude the possibility that broader Bargmann-type or correlation invariants achieve complete LU classification.

We address three related but logically distinct questions.
\begin{itemize}
\item \emph{Entanglement discrimination.}  Does the family $\set{B_{\mathbf{w}}:\mathbf{w}\in\set{0,1,2}^*}$ decide whether a bipartite state $\rho_{AB}$ is separable? More modestly, does it determine whether the partial transpose is positive?
\item \emph{Local-unitary orbit discrimination.}  If two states $\rho$ and $\rho'$ satisfy $B_{\mathbf{w}}(\rho)=B_{\mathbf{w}}(\rho')$  for every word $\mathbf{w}$, must they be locally unitarily equivalent?
\item \emph{Polynomial invariant-ring generation.}  Does the algebra generated by all $B_{\mathbf{w}}$ coincide with the complete ring of polynomial LU invariants?
\end{itemize}
These tasks should not be conflated. Deciding positivity of the partial transpose generally requires less information than deciding separability. Separability discrimination is weaker than complete orbit discrimination: two inequivalent states may both be separable or both be entangled. A family that separates orbits need not generate every polynomial invariant. The natural informational hierarchy is therefore
\begin{eqnarray*}
&&\text{PPT testing}\\
&&~~~\preceq\text{separability decision}\\
&&~~~~~~\preceq\text{LU-orbit separation}\\
&&~~~~~~~~~\preceq\text{full invariant-ring generation},
\end{eqnarray*}
where $\preceq$ indicates that a task may require no more information than the task to its right. Some of these distinctions collapse in low dimension;  they become strict as soon as the local dimensions increase.

Our first observation is an identity specific to a qubit subsystem. For every state on $\bbC^2\ot\bbC^d$,
\begin{eqnarray*}
(\sigma_y\ot\I_B)\rho^\Gamma_{AB}(\sigma_y\ot\I_B)=\I_A\ot\rho_B - \rho_{AB},
\end{eqnarray*}
where $\sigma_y$ is the second Pauli matrix and $\Gamma=\t_A$ denotes partial transposition on the first subsystem $A$ throughout. Thus $\rho^\Gamma_{AB}$ is unitarily similar to $\bsX_2-\bsX_0$. Although $\bsX_0$ and $\bsX_2$ need not commute, every moment $\Tr{[\rho^\Gamma_{AB}]^k}$ expands as a finite linear combination of ordinary trace words. Newton’s identities then express every coefficient of the characteristic polynomial of $\rho^\Gamma_{AB}$ as a polynomial in these invariants. Consequently, for every qubit-qudit state the ordinary Bargmann invariants determine the complete partial-transpose spectrum and decide the positive partial-transpose (PPT) property.

In $2\ot2$ and $2\ot3$ systems, the Peres–Horodecki criterion makes PPT equivalent to separability \cite{Peres1996,Horodecki1996}. For two-qubit states, the partial transpose has at most one negative eigenvalue, so entanglement is detected by the sign of its determinant: a single explicit polynomial inequality in ordinary Bargmann invariants. Together with a previously established finite separating family of 18 LU words \cite{Zhang2025} and a finite generating set for the two-qubit polynomial invariant ring \cite{Zhang2026}, this produces a complete picture. The two-qubit case is exceptional in that the three questions above all have affirmative answers within the same traceword framework.
\begin{itemize}
\item ordinary Bargmann invariants decide two-qubit separability;
\item a finite family separates all two-qubit LU orbits;
\item a finite extension of that family generates the full two-qubit LU-invariant ring.
\end{itemize}

For qubit-qutrit states, ordinary Bargmann invariants continue to decide entanglement, but the determinant alone
is no longer sufficient. We derive the first six partial-transpose moments from the noncommutative expansion of $\Tr{[\rho^\Gamma_{AB}]^k}=\Tr{(\bsX_2-\bsX_0)^k}$, convert them into elementary symmetric polynomials, and obtain a necessary and sufficient separability criterion consisting of four polynomial inequalities. Separability discrimination, however, does
not imply orbit discrimination. We construct two fullrank qubit–qutrit states $\rho(\bsH)$ and $\rho(\bsK)$ with identical global spectra, maximally mixed marginals, and
\begin{eqnarray*}
B_{\mathbf{w}}(\rho(\bsH))=B_{\mathbf{w}}(\rho(\bsK))
\end{eqnarray*}
for every finite word $\mathbf{w}$, yet the two states are not LU equivalent: their traceless correlation operators have different operator-Schmidt ranks. The same pair shows that the ordinary Bargmann algebra is a proper subalgebra of the full polynomial LU-invariant ring. An explicit degree-four invariant $J_2$, built from the second elementary symmetric polynomial of the squared singular values of the correlation matrix, separates the two states.

The mechanism is structural. If both marginals are maximally mixed, then $\bsX_1$ and $\bsX_2$ are scalar multiples of the identity, and
\begin{eqnarray}\label{eq:genericexpression}
B_{\mathbf{w}}(\rho)=d^{-N_1(\mathbf{w})}_Ad^{-N_2(\mathbf{w})}_B\Tr{\rho^{N_0(\mathbf{w})}_{AB}},
\end{eqnarray}
where $N_k(\mathbf{w})$ counts occurrences of the symbol $k$ in $\mathbf{w}$. On this locally maximally mixed sector, the entire infinite family of ordinary Bargmann invariants collapses to the global spectral moments of $\rho_{AB}$ and loses all information about the tensor-product geometry of the eigenspaces.

When both local dimensions are at least three, the failure is stronger still. For every $n\geqslant3$, we construct two states on $\bbC^n\ot\bbC^n$ with the same global spectrum and maximally mixed marginals, one separable and one negative-partial-transpose (NPT) entangled. Eq.~\eqref{eq:genericexpression} then implies that every ordinary marginal-word invariant agrees on the pair. In these dimensions the full
infinite family fails even to decide entanglement.

The remainder of the paper is organized as follows. Section~\ref{sect:2} records global versus local unitary equivalence, defines the ordinary Bargmann invariants, and proves that they determine the partial-transpose spectrum of every qubit–qudit state. Section~\ref{sect:3} develops the complete two-qubit picture. Section~\ref{sect:4} treats the qubit–qutrit system: a complete separability criterion, explicit failure of orbit separation, and an invariant outside the trace-word algebra, together with a complementary counterexample whose marginals are nondegenerate. Section~\ref{sect:5} constructs isospectral separable–NPT pairs in
every $n\times n$ system with $n\geqslant3$. Section~\ref{sect:6} proves that the ordinary Bargmann algebra equals the full polynomial
LU-invariant ring if and only if $(d_A,d_B)=(2,2)$. Section~\ref{sect:7} collects conclusions and open problems. Appendix~\ref{app:A} records the positivity criterion used for partial transposes; Appendix~\ref{app:B} computes the quartic correlation invariant; Appendix~\ref{app:C} gives the dimension count on the locally maximally mixed slice.

\section{Preliminaries}\label{sect:2}

\subsection{ Global and local unitary equivalence}

Let $\rho_{AB}, \sigma_{AB} \in \rD(\bbC^{d_A} \ot\bbC^{d_B})$. Two states are \emph{globally unitarily equivalent} if $\sigma_{AB} = \bsW\rho_{AB}\bsW^\dagger$ for some unitary $\bsW \in \rU(d_A d_B)$, and \emph{locally unitarily equivalent} if $\bsW$ may be chosen of product form $\bsU_A \ot\bsU_B$. Global unitary equivalence is completely determined by the spectrum. Local unitary equivalence is strictly finer because it preserves the tensor-product decomposition.

In particular, local unitaries preserve separability and entanglement, the spectra of the reduced states, the operator-Schmidt coefficients, and all polynomial invariants of the local-unitary action. A general global unitary need not preserve any of these bipartite properties. States on a single global unitary orbit may therefore occupy distinct local-unitary orbits and may even differ in entanglement.

The operator-Schmidt rank of a bipartite operator $\bsZ$ is the smallest integer $r$ such that $\bsZ=\sum^r_{k=1}\bsA_k\ot\bsB_k$.  Local-unitary conjugation acts by invertible transformations on the two local operator spaces and therefore preserves this rank.

\subsection{Ordinary marginal-word Bargmann invariants}

Let $\Theta=\set{0,1,2}^*$ denote the set of all finite words on the alphabet $\set{0,1,2}$. The ordinary Bargmann map sends a
state to the family
\begin{eqnarray*}
\mathfrak{B}(\rho)=\Pa{B_{\mathbf{w}}(\rho)}_{\mathbf{w}\in\Theta}.
\end{eqnarray*}
If $\rho'_{AB}=(\bsU_A \ot\bsU_B)\rho_{AB}(\bsU_A \ot\bsU_B)^\dagger$, the transformed marginals are $\rho'_A=\bsU_A\rho_A\bsU^\dagger_A$ and $\rho'_B=\bsU_B\rho_B\bsU^\dagger_B$. Writing $\bsU=\bsU_A\ot\bsU_B$, one has $\bsX'_k=\bsU\bsX_k\bsU^\dagger$ for $k=0,1,2$, and therefore $B_{\mathbf{w}}(\rho')=B_{\mathbf{w}}(\rho)$ for every word $\mathbf{w}$.

Because $\bsX_0,\bsX_1,\bsX_2$ are Hermitian, the generalized
Specht criterion \cite{Horn2012} for matrix tuples implies that equality of all
ordinary trace words is equivalent to simultaneous unitary
equivalence of the triples. More precisely, for two states
$\rho_{AB}$ and $\rho'_{AB}$, $B_{\mathbf{w}}(\rho)=B_{\mathbf{w}}(\rho')$ for every $\mathbf{w}\in\Theta$
if and only if there exists a global unitary
$\bsW\in \rU(d_Ad_B)$ such that
$\bsX_k(\rho')=\bsW \bsX_k(\rho)\bsW^\dagger$ for $k=0,1,2$.
Here the unitary $W$ is global and need not be of product
form. This distinction is the source of the counterexamples
below.

The preceding relation implies that $\rho_A$ and $\rho'_A$
have the same spectrum, and likewise for $\rho_B$ and
$\rho'_B$. By the spectral theorem, there exist unitaries
$\bsL_X\in \rU(d_X)$, $X=A,B$, such that $\rho_X=\bsL_X\rho'_X \bsL_X^\dagger$.
Let
\begin{eqnarray*}
\bsV=\bsL_A\otimes \bsL_B,\quad
\bsU=\bsV\bsW,\quad
\widetilde{\rho}_{AB}=\bsV\rho'_{AB}\bsV^\dagger.
\end{eqnarray*}
Then
\begin{eqnarray*}
\widetilde{\rho}_{AB}=\bsU\rho_{AB}\bsU^\dagger,
\quad
\bsU\bsX_k\bsU^\dagger=\bsX_k\quad(k=1,2).
\end{eqnarray*}
and hence $[\bsU,\bsX_k]=0$ for $k=1,2$.
Denote by $\sC(\tau_A,\tau_B)$ the set of bipartite density operators with prescribed
marginals $\tau_A$ and $\tau_B$ \cite{Parthasarathy2005}. For any
$\tau,\tau'\in\sC(\tau_A,\tau_B)$, one has $B_{\mathbf{w}}(\tau)=B_{\mathbf{w}}(\tau')$ for every
$\mathbf{w}\in\Theta$ if and only if there exists a global unitary
$\bsU\in\rU(d_Ad_B)$ such that
\begin{eqnarray*}
\tau'=\bsU\tau \bsU^\dagger,
\quad
[\bsU,\bsX_1]=[\bsU,\bsX_2]=\mathbf{0}.
\end{eqnarray*}

A particularly important simplification occurs when both marginals are maximally mixed. If $\rho_A=\I_A/d_A$ and $\rho_B=\I_B/d_B$, then $\bsX_1$ and $\bsX_2$ are scalar multiples of the identity and commute with $\bsX_0$. Consequently
\begin{eqnarray*}
B_{\mathbf{w}}(\rho)=d^{-N_1(\mathbf{w})}_Ad^{-N_2(\mathbf{w})}_B\Tr{\rho^{N_0(\mathbf{w})}_{AB}},
\end{eqnarray*}
On this locally maximally mixed sector the complete infinite family depends only on the global spectral moments of $\rho_{AB}$. In particular, any two locally maximally mixed states with the same global spectrum agree on every ordinary marginal-word invariant. This observation underlies the constructions in Sections~\ref{sect:4} and \ref{sect:5}.

\subsection{Partial-transpose moments of qubit-qudit states}

We now specialize to states on $\cH_A\ot\cH_B=\bbC^2\ot\bbC^d$. Let $\Gamma=\t_A$ denote partial transposition with respect to a fixed orthonormal basis of the qubit subsystem $A$, and write
$$
\sigma_x=\Pa{\begin{array}{cc}
0 & 1 \\
1 & 0
\end{array}},\sigma_y=\Pa{\begin{array}{cc}
0 & -\mathrm{i} \\
\mathrm{i} & 0
\end{array}}, \sigma_z=\Pa{\begin{array}{cc}
1 & 0 \\
0 & -1
\end{array}}.
$$
The following identity converts the partial transpose into
an expression involving only the original state and one
marginal. It is the reduction criterion of Cerf, Adami,
and Gingrich and of Horodecki and Horodecki, specialized
to a qubit factor \cite{Cerf1999,Horodecki1999}.
\begin{lem}\label{lem:1}
For every state $\rho_{AB}\in\rD(\bbC^2\ot\bbC^d)$, it holds that
\begin{eqnarray}\label{eq:1}
(\sigma_y\ot\I_B)\rho^\Gamma_{AB}(\sigma_y\ot\I_B) = \I_2\ot\rho_B -
\rho_{AB}.
\end{eqnarray}
Consequently, $\rho^\Gamma_{AB}\geqslant\mathbf{0}$ if
and only if $\I_2\ot\rho_B-\rho_{AB}\geqslant\mathbf{0}$. When $d=2$, 
\begin{eqnarray}\label{eq:concurrence}
&&(\sigma_y\ot\sigma_y)\rho^*_{AB}(\sigma_y\ot\sigma_y)=(\sigma_y\ot\sigma_y)\rho^\t_{AB}(\sigma_y\ot\sigma_y) \notag\\
&&= \I_2\ot\I_2+\rho_{AB} -
\rho_A\ot\I_2-\I_2\ot\rho_B,
\end{eqnarray}
where $*$ and $\t$ denote complex conjugate and the full transpose, respectively.
\end{lem}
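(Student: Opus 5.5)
The identity \eqref{eq:1} is linear in $\rho_{AB}$, and both sides are compatible with the block structure over the qubit. So I will verify it blockwise. Writing $\rho_{AB}=\sum_{i,j\in\set{0,1}}\out{i}{j}\ot\bsR_{ij}$ with $\bsR_{ij}\in M_d(\bbC)$ and $\bsR_{ij}^\dagger=\bsR_{ji}$, I have $\rho_B=\bsR_{00}+\bsR_{11}$ and $\rho^\Gamma_{AB}=\sum_{i,j}\out{j}{i}\ot\bsR_{ij}$. In block form,
\begin{eqnarray*}
\rho_{AB}=\Pa{\begin{array}{cc}\bsR_{00}&\bsR_{01}\\ \bsR_{10}&\bsR_{11}\end{array}},\quad
\rho^\Gamma_{AB}=\Pa{\begin{array}{cc}\bsR_{00}&\bsR_{10}\\ \bsR_{01}&\bsR_{11}\end{array}}.
\end{eqnarray*}
Conjugating a $2\times2$ block matrix by $\sigma_y\ot\I_B$ sends $\Pa{\begin{array}{cc}\bsA&\bsB\\ \bsC&\bsD\end{array}}$ to $\Pa{\begin{array}{cc}\bsD&-\bsC\\ -\bsB&\bsA\end{array}}$. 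This is a one-line check using $\sigma_y=-\mathrm{i}\out{0}{1}+\mathrm{i}\out{1}{0}$. Applied to $\rho^\Gamma_{AB}$, it gives $\Pa{\begin{array}{cc}\bsR_{11}&-\bsR_{01}\\ -\bsR_{10}&\bsR_{00}\end{array}}$. This matrix equals $\I_2\ot(\bsR_{00}+\bsR_{11})-\rho_{AB}$, which proves \eqref{eq:1}.

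The "consequently" part follows at once. Since $\sigma_y\ot\I_B$ is unitary (and Hermitian), $\rho^\Gamma_{AB}$ is unitarily similar to $\I_2\ot\rho_B-\rho_{AB}$. Unitary similarity preserves positivity, and in fact the whole spectrum.

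For $d=2$, I apply the same block identity a second time, now on the $B$ factor, via the composition $\rho^\t_{AB}=(\rho^{\Gamma})^{\t_B}$.
\begin{itemize}
\item The first equality in \eqref{eq:concurrence} uses only Hermiticity: $\rho^*=(\rho^\dagger)^\t=\rho^\t$.
\item By \eqref{eq:1}, $(\sigma_y\ot\I)\rho^{\t_A}(\sigma_y\ot\I)=\I\ot\rho_B-\rho_{AB}$.
\item Take the partial transpose on $B$ of both sides, then conjugate by $\I\ot\sigma_y$. The operation $\t_B$ commutes with conjugation by $\sigma_y\ot\I$, so the left-hand side becomes $(\sigma_y\ot\sigma_y)\rho^\t_{AB}(\sigma_y\ot\sigma_y)$.
\item On the right, the term $\I\ot\rho_B$ becomes $\I\ot\sigma_y\rho_B^\t\sigma_y$. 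For a $2\times2$ matrix, $\sigma_y\bsM^\t\sigma_y=\Tr{\bsM}\I_2-\bsM$, which is the single-qubit case of the block formula. So this term equals $\I\ot(\I_2-\rho_B)$, using $\Tr{\rho_B}=1$.
\item The term $-\rho_{AB}$ becomes $-(\I\ot\sigma_y)\rho^{\t_B}_{AB}(\I\ot\sigma_y)$. By \eqref{eq:1} with the roles of $A$ and $B$ swapped, this equals $-(\rho_A\ot\I_2-\rho_{AB})$.
\end{itemize}
Summing these gives $\I_2\ot\I_2+\rho_{AB}-\rho_A\ot\I_2-\I_2\ot\rho_B$, as claimed.

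No step here is a real obstacle; the computation is routine. The only place needing care is bookkeeping in the $d=2$ case. Specifically, I need to confirm that partial transposes on different factors commute with local conjugations on the other factor. I also need to confirm that the swapped-roles version of \eqref{eq:1} holds by symmetry. Both facts are immediate from the tensor-product structure.
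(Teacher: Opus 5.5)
Your proof is correct and takes essentially the paper's route. You decompose $\rho_{AB}$ into blocks over the qubit factor and use the identity $\sigma_y\bsM^\t\sigma_y=\Tr{\bsM}\I_2-\bsM$; your block-conjugation formula for $\sigma_y\ot\I_B$ is just the operator-valued form of that identity, applied to the matrix units. For the $d=2$ case, composing $\t_A$ with $\t_B$ and invoking Eq.~\eqref{eq:1} with the roles of $A$ and $B$ swapped carefully unpacks the paper's one-line remark that the same identity is applied on both factors.
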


\begin{proof}
For every $2\times2$ complex matrix $\bsA$, one has
$\sigma_y\bsA^\t\sigma_y = \Tr{\bsA}\I_2 - \bsA$. Write $\rho_{AB}$ in block form as
$\rho_{AB}=\sum^1_{i,j=0}\out{i}{j}\ot \bsB_{ij}$, where each
$\bsB_{ij}$ is a $d\times d$ complex matrix. Partial transposition with respect to subsystem $A$ gives
$\rho^\Gamma_{AB}=\sum^1_{i,j=0}\out{j}{i}\ot \bsB_{ij}$. Applying the $2\times 2$ identity to the matrix unit $\out{i}{j}$, we obtain $\sigma_y\out{j}{i}\sigma_y=\delta_{ij}\I_2-\out{i}{j}$. It follows that
\begin{eqnarray*}
&&(\sigma_y\ot\I_B)\rho^\Gamma_{AB}(\sigma_y\ot\I_B) =
\sum^1_{i,j=0}[\sigma_y\out{j}{i}\sigma_y]\ot \bsB_{ij}\\
&&=\sum^1_{i,j=0}[\delta_{ij}\I_2-\out{i}{j}]\ot
\bsB_{ij}=\I_2\ot\rho_B-\rho_{AB},
\end{eqnarray*}
where $\rho_B=\sum^1_{i=0} \bsB_{ii}$. Since $\sigma_y\ot\I_B$ is unitary, the two operators in Eq.~\eqref{eq:1} are unitarily similar. They therefore have the same spectrum and are positive semidefinite simultaneously. Finally, Eq.~\eqref{eq:concurrence} follows
by applying the same identity on both factors.
\end{proof}
Lemma~\ref{lem:1} removes the partial transpose from all spectral-moment calculations. Set $\bsY=\rho^\Gamma_{AB}$. Because $\bsY$ is unitarily similar to $\bsX_2-\bsX_0$, 
\begin{eqnarray}\label{eq:PTBarg}
p_k(\bsY) =\Tr{\bsY^k}=\Tr{(\bsX_2-\bsX_0)^k}.
\end{eqnarray}
The operators $\bsX_0$ and $\bsX_2$ do not generally commute, so the ordinary binomial theorem cannot be applied. Expanding the noncommutative power gives, for every $k\geqslant 1$,
\begin{eqnarray*}
p_k(\bsY) = \sum_{\mathbf{w}\in\set{0,2}^k}(-1)^{N_0(\mathbf{w})}B_{\mathbf{w}},
\end{eqnarray*}
where $N_0(\mathbf{w})$ is the number of zeros in $\mathbf{w}$. Each occurrence of $0$ contributes a factor $-\bsX_0$, and each occurrence of $2$ contributes $\bsX_2$. The resulting terms may then be grouped by cyclic equivalence.

Additional simplifications follow from the definition of $\bsX_2$. For every $m\geqslant1$, $\bsX^m_2=\I_A\ot\rho^m_B$, hence
$\Tr{\bsX^m_2}=2\Tr{\rho^m_B}$. Moreover,
$$
\Tr{\bsX_0\bsX^{m-1}_2}=\Tr{\rho_{AB}(\I_2\ot\rho^{m-1}_B)}=\Tr{\rho^m_B}.
$$
Consequently, $B_{22\cdots2} = 2
B_{02\cdots2}$, and every pure $\bsX_2$-word arising in the expansion of $p_k(\bsY)$ may be replaced by a word containing $\bsX_0$.

The next lemma connects these moment expansions with the characteristic polynomial of the partial transpose.
\begin{lem}\label{lem:2}
Let $\rho_{AB}$ be a state acting on
$\bbC^2\ot\bbC^d$, and let $\bsY=\rho^\Gamma_{AB}$. Write the characteristic polynomial of $\bsY$ as
\begin{eqnarray*}
f_{\bsY}(\lambda)&=& \det(\lambda\I_{2d}-\bsY)=  \sum^{2d}_{k=0}(-1)^k e_k(\bsY)\lambda^{2d-k},
\end{eqnarray*}
where $e_k(\bsY)$ is the $k$-elementary symmetric polynomial of the eigenvalues of $\bsY$, with $e_0(\bsY)\equiv1$. Then, for every $k=0,1,\ldots,2d$, the coefficient $e_k(\bsY)$ can be expressed as a polynomial in ordinary marginal-word Bargmann invariants formed from the pair
$(\bsX_0,\bsX_2)=(\rho_{AB},\I_A\ot\rho_B)$.
\end{lem}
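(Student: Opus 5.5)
The plan is to combine the moment identity \eqref{eq:PTBarg}, which follows from Lemma~\ref{lem:1}, with Newton's identities. First, by Lemma~\ref{lem:1}, $\bsY=\rho^\Gamma_{AB}$ is unitarily similar to $\bsX_2-\bsX_0$. Hence the two operators have the same characteristic polynomial, so $e_k(\bsY)=e_k(\bsX_2-\bsX_0)$ and $p_k(\bsY)=\Tr{(\bsX_2-\bsX_0)^k}$ for every $k\geqslant1$.

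Next, I will expand the noncommutative power word by word:
\begin{eqnarray*}
p_k(\bsY)=\sum_{\mathbf{w}\in\set{0,2}^k}(-1)^{N_0(\mathbf{w})}B_{\mathbf{w}}.
\end{eqnarray*}
This exhibits each power sum $p_k$ as an integer linear combination of ordinary marginal-word Bargmann invariants in the letters $0$ and $2$. The identity requires only distributivity and linearity of the trace. No commutation between $\bsX_0$ and $\bsX_2$ is needed.

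Then I will invoke Newton's identities for the eigenvalues $\lambda_1,\dots,\lambda_{2d}$ of $\bsY$:
\begin{eqnarray*}
k\,e_k(\bsY)=\sum_{i=1}^{k}(-1)^{i-1}e_{k-i}(\bsY)\,p_i(\bsY),\qquad 1\leqslant k\leqslant 2d,
\end{eqnarray*}
with $e_0=1$. An induction on $k$ then shows that each $e_k$ is a polynomial with rational coefficients in $p_1,\dots,p_k$; for instance, $e_1=p_1$ and $e_2=(p_1^2-p_2)/2$. Substituting the word expansions of the $p_i$ gives each $e_k(\bsY)$ as a polynomial in the invariants $B_{\mathbf{w}}$ with $\mathbf{w}\in\set{0,2}^*$ and $|\mathbf{w}|\leqslant k$. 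The case $k=0$ is trivial.

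There is no real obstacle here. The only point that needs care is that the division by $k$ in Newton's identities is harmless, since we work over $\bbC$ and "polynomial" is understood with rational coefficients. Optionally, the simplification $B_{22\cdots2}=2B_{02\cdots2}$ can be used to eliminate pure $\bsX_2$-words, but it is not needed for the statement.
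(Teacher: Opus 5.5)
Your proposal is correct and follows the paper's own proof: Lemma~\ref{lem:1} gives $p_k(\bsY)=\Tr{(\bsX_2-\bsX_0)^k}$, and the noncommutative expansion writes each $p_k$ as a signed sum of trace words over $\set{0,2}$. Newton's identities, with the harmless division by $k$ over $\bbC$, then express each $e_k(\bsY)$ as a polynomial in $p_1,\dots,p_k$, and hence in the ordinary Bargmann invariants.
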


\begin{proof}
Let $p_j(\bsY)=\Tr{\bsY^j}$ for $j\geqslant1$. Newton’s identities relate the elementary symmetric polynomials $e_k(\bsY)$ to the power sums $p_j(\bsY)$ \cite{Macdonald1995}:
$$
ke_k(\bsY)=\sum^k_{j=1}(-1)^{j-1}e_{k-j}(\bsY)p_j(\bsY),\quad k\geqslant1,
$$
which can also be expressed more explicitly as that given in Proposition~\ref{prop:1} (see Appendix~\ref{app:A}), with $e_0(\bsY)=1$. Therefore, for each $k$, the coefficient $e_k(\bsY)$ is a polynomial in $p_1(\bsY),\ldots,p_k(\bsY)$. By Eq.~\eqref{eq:PTBarg}, every power sum $p_j(\bsY)$ is a finite linear combination of ordinary trace words formed from $\bsX_0$ and $\bsX_2$. Substituting these trace-word expansions into Newton's identities expresses $e_k(\bsY)$ as a polynomial in ordinary marginal-word Bargmann invariants.
\end{proof}

\begin{cor}
For every qubit-qudit state $\rho_{AB}\in\rD(\bbC^2\ot\bbC^d)$, the complete spectrum of the partial transpose $\rho^\Gamma_{AB}$  is determined by the ordinary marginal-word Bargmann invariants.
\end{cor}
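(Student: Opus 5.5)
The plan is to deduce the corollary from Lemma~\ref{lem:2} together with the fact that a monic polynomial determines its roots with multiplicity. By Lemma~\ref{lem:1}, $\bsY=\rho^\Gamma_{AB}$ is unitarily similar to $\bsX_2-\bsX_0$. In particular $\bsY$ is Hermitian of size $2d$, so its spectrum is a multiset of $2d$ real numbers. That multiset is exactly the multiset of roots of the characteristic polynomial $f_{\bsY}(\lambda)=\sum_{k=0}^{2d}(-1)^k e_k(\bsY)\lambda^{2d-k}$.

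The first step is to record explicitly that each power sum $p_j(\bsY)$, for $1\leqslant j\leqslant 2d$, equals $\sum_{\mathbf{w}\in\set{0,2}^j}(-1)^{N_0(\mathbf{w})}B_{\mathbf{w}}(\rho)$ by Eq.~\eqref{eq:PTBarg}. This is a fixed integer linear combination of ordinary marginal-word invariants, and the combination does not depend on $\rho$. The second step invokes Newton's identities, as in Lemma~\ref{lem:2}. They express $e_k(\bsY)$ as a universal polynomial with rational coefficients in $p_1,\dots,p_k$; the division by $k$ causes no trouble over $\bbC$. Hence each coefficient of $f_{\bsY}$ is a universal polynomial $P_k$ in finitely many $B_{\mathbf{w}}$ with $\mathbf{w}\in\set{0,2}^*$ and $|\mathbf{w}|\leqslant 2d$.

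The conclusion is then immediate. Suppose two states $\rho,\rho'$ on $\bbC^2\ot\bbC^d$ satisfy $B_{\mathbf{w}}(\rho)=B_{\mathbf{w}}(\rho')$ for all words. Then $e_k(\rho^\Gamma)=e_k(\rho'^\Gamma)$ for all $k$, so the two characteristic polynomials coincide. By unique factorization of monic polynomials over $\bbC$, the partial transposes have the same eigenvalues with the same multiplicities. In the stronger constructive sense, the spectrum is obtained as the roots of the polynomial whose coefficients are $P_k(\mathfrak{B}(\rho))$.

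I expect no real obstacle here; the argument is bookkeeping. The only point needing care is that the expansion and Newton's formulas are universal, so that equal invariant values force equal coefficients. A secondary point is that the independence of $\Gamma$ from the chosen qubit basis is irrelevant, since the spectrum of $\rho^\Gamma$ is basis independent anyway. As a side remark, I would note that the PPT property follows from the spectrum: $\rho^\Gamma\geqslant\mathbf{0}$ if and only if all roots of $f_{\bsY}$ are nonnegative. Equivalently, this holds if and only if $e_k(\bsY)\geqslant0$ for all $k$ (Descartes' rule for real-rooted polynomials, Appendix~\ref{app:A}).
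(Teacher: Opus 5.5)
Your proposal is correct and follows the paper's own route. Lemma~\ref{lem:2} (the trace-word expansion of $\Tr{(\bsX_2-\bsX_0)^k}$ combined with Newton's identities) yields every coefficient of the characteristic polynomial of $\rho^\Gamma_{AB}$, and that polynomial fixes the spectrum with multiplicities; your explicit point that these expansions are universal, state-independent polynomials is the bookkeeping the paper leaves implicit.
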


\begin{proof}
By Lemma~\ref{lem:2},  the ordinary Bargmann invariants determine every coefficient of the characteristic polynomial of $\rho^\Gamma_{AB}$. The characteristic polynomial determines the spectrum, including algebraic multiplicities.
\end{proof}

\begin{cor}
For every qubit–qudit state, the ordinary marginal-word Bargmann invariants completely determine whether the state satisfies the PPT condition.
\end{cor}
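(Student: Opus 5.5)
The plan is to deduce this directly from the preceding corollary, which states that the full spectrum of $\rho^\Gamma_{AB}$, with algebraic multiplicities, is fixed by the ordinary marginal-word invariants. The PPT condition $\rho^\Gamma_{AB}\geqslant\mathbf{0}$ is a spectral property. Since $\rho^\Gamma_{AB}$ is Hermitian (partial transposition preserves Hermiticity), it is positive semidefinite if and only if all its eigenvalues are nonnegative. So two qubit–qudit states with identical ordinary Bargmann invariants have identical partial-transpose spectra, and therefore either both are PPT or both are not.

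To make the statement effective rather than purely existential, I would also give a criterion stated directly in terms of the coefficients $e_k(\bsY)$ from Lemma~\ref{lem:2}. For a Hermitian $\bsY$, whose characteristic polynomial has only real roots, the standard fact is
\begin{eqnarray*}
\bsY\geqslant\mathbf{0}\iff e_k(\bsY)\geqslant0\ \text{for all } k=1,\ldots,2d.
\end{eqnarray*}
The forward direction holds because each $e_k$ is a sum of products of nonnegative eigenvalues. For the converse, if every $e_k\geqslant0$ then $f_{\bsY}(-t)=(-1)^{2d}\sum_k e_k t^{2d-k}>0$ for every $t>0$, so there is no negative root. This is presumably the positivity criterion the paper places in Appendix~\ref{app:A}. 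Substituting the polynomial expressions of Lemma~\ref{lem:2}, which come from Newton's identities applied to $p_j(\bsY)=\sum_{\mathbf{w}\in\set{0,2}^j}(-1)^{N_0(\mathbf{w})}B_{\mathbf{w}}$, turns the PPT condition into finitely many polynomial inequalities in the $B_{\mathbf{w}}$.

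The only points needing care are that Lemma~\ref{lem:1} gives only a unitary similarity rather than an equality, which preserves both the spectrum and positivity, and that $\bsY$ is Hermitian, so that nonnegativity of eigenvalues is equivalent to positive semidefiniteness. I do not expect any real obstacle, since the sign-of-coefficients criterion is the only nontrivial ingredient and it follows from the real-rootedness argument above.
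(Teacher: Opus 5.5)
Your proposal is correct and follows the paper's own argument. The paper likewise uses Hermiticity of $\rho^\Gamma_{AB}$ to reduce PPT to nonnegativity of its eigenvalues, converts this via Proposition~\ref{prop:1} (your $\det(t\I+\bsY)>0$ argument) into $e_k(\rho^\Gamma_{AB})\geqslant0$, and rewrites each inequality in ordinary Bargmann invariants through Lemma~\ref{lem:2}.
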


\begin{proof}
Indeed, because $\rho^\Gamma_{AB}$ is Hermitian, $\rho^\Gamma_{AB}\geqslant\mathbf{0}$ if and only if all its eigenvalues are nonnegative. Equivalently, by Proposition~\ref{prop:1} in Appendix~\ref{app:A},
$$
e_k(\rho^\Gamma_{AB})\geqslant0,\quad k=0,1,\ldots,2d.
$$
Each of these inequalities can be written entirely in terms of ordinary marginal-word Bargmann invariants by Lemma~\ref{lem:2}.
\end{proof}
These conclusions are PPT statements, not general
separability theorems. In the $2\ot2$ and $2\ot3$ systems, PPT is equivalent to separability \cite{Peres1996,Horodecki1996}, so the ordinary invariants provide complete entanglement criteria. For $2\ot d$ system with $d\geqslant4$, PPT remains necessary but is not sufficient for separability. In those dimensions the same invariants
still determine the complete partial-transpose spectrum,
but that information alone does not decide separability.

\section{The two-qubit system}\label{sect:3}

For $(d_A,d_B)=(2,2)$, the ordinary marginal-word Bargmann invariants provide a unified description of separability, LU-orbit equivalence, and the polynomial LU-invariant ring.

\begin{thrm}\label{th:1}
Let $\rho_{AB}\in\rD(\bbC^2\ot\bbC^2)$ and write
$$
\bsX_0=\rho_{AB},\bsX_1=\rho_A\ot\I_2,\bsX_2=\I_2\ot\rho_B.
$$
The state $\rho_{AB}$ is entangled if and only if
\begin{eqnarray}\label{eq:3}
&&6 (B_{0000}-4 B_{0002}+4 B_{0022}+2 B_{0202}-2 B_{0222})\notag\\
&&+8 (B_{000}-3 B_{002}+B_{022})-3 B_{00}^2+6 B_{00}>1.
\end{eqnarray}
Explicitly, $\rho_{AB}$ is entangled if and only if
\begin{widetext}
\begin{eqnarray}\label{eq:4}
&&6\Pa{\Tr{\rho_{AB}^4}-4 \Tr{\rho_{AB}^3(\I_A\ot\rho_B)}+4 \Tr{\rho_{AB}^2(\I_A\ot\rho_B^2)}+2 \Tr{\rho_{AB}(\I_A\ot\rho_B)\rho_{AB}(\I_A\ot\rho_B)}-2\Tr{\rho_B^4}}\notag\\
&&  + 8 \Pa{\Tr{\rho_{AB}^3}-3 \Tr{\rho_{AB}^2(\I_A\ot\rho_B)}+\Tr{\rho_B^3)}} - 3 \Tr{\rho_{AB}^2}^2 + 6 \Tr{\rho_{AB}^2}>1.
\end{eqnarray}
\end{widetext}
\end{thrm}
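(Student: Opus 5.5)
The plan is to reduce entanglement to the sign of $\det(\rho^\Gamma_{AB})$, and then to write that determinant in ordinary Bargmann invariants using Lemma~\ref{lem:1} and Lemma~\ref{lem:2}.

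\emph{Step 1: reduce to $e_4(\rho^\Gamma_{AB})<0$.} By the Peres--Horodecki criterion, a two-qubit state is separable if and only if $\rho^\Gamma_{AB}\geqslant\mathbf{0}$. If $\rho_{AB}$ is separable, then $e_4(\rho^\Gamma_{AB})=\det\rho^\Gamma_{AB}\geqslant0$. For the converse I will invoke the known fact (Sanpera--Tarrach--Vidal; Verstraete et al.) about an entangled two-qubit state. Its partial transpose has full rank and exactly one negative eigenvalue, so $\det\rho^\Gamma_{AB}<0$. Hence $\rho_{AB}$ is entangled if and only if $e_4(\bsY)<0$, where $\bsY=\rho^\Gamma_{AB}$. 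This is the one nonalgebraic ingredient, and I expect it to be the main point needing care. Positivity of the other coefficients $e_k$ is not needed, precisely because the negative eigenvalue is simple and the remaining three eigenvalues are strictly positive.

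\emph{Step 2: express $e_4$ through power sums.} Newton's identities with $p_1(\bsY)=\Tr{\rho_{AB}}=1$ give
$$24e_4=1-6p_2+3p_2^2+8p_3-6p_4 .$$
Therefore $e_4<0$ is equivalent to
$$6p_4-8p_3-3p_2^2+6p_2>1 .$$

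\emph{Step 3: expand the power sums.} By Eq.~\eqref{eq:PTBarg}, $p_k=\Tr{(\bsX_2-\bsX_0)^k}$. I expand each power noncommutatively and group the words by cyclic class. Throughout I use the reductions $\Tr{\bsX_2^m}=2\Tr{\rho_B^m}$ and $B_{02\cdots2}=\Tr{\rho_B^m}$, which give $B_{22\cdots2}=2B_{02\cdots2}$.
\begin{itemize}
\item For $k=2$: $p_2=B_{22}-2B_{02}+B_{00}=B_{00}$, since $B_{22}=2B_{02}$.
\item For $k=3$: $p_3=B_{222}-3B_{022}+3B_{002}-B_{000}$. Using $B_{222}=2B_{022}$, this becomes $-(B_{000}-3B_{002}+B_{022})$.
\item For $k=4$: the $\binom{4}{2}=6$ words with two zeros split into four of type $0022$ and two of type $0202$. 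This gives $p_4=B_{2222}-4B_{0222}+4B_{0022}+2B_{0202}-4B_{0002}+B_{0000}$. Using $B_{2222}=2B_{0222}$, this becomes $B_{0000}-4B_{0002}+4B_{0022}+2B_{0202}-2B_{0222}$.
\end{itemize}

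Substituting these into the inequality of Step 2 yields exactly Eq.~\eqref{eq:3}. Writing each word as an explicit trace, for instance $B_{0022}=\Tr{\rho^2_{AB}(\I_A\ot\rho_B^2)}$ and $B_{0222}=\Tr{\rho_B^4}$, then gives Eq.~\eqref{eq:4}. The remaining verification is routine bookkeeping of signs and cyclic multiplicities.
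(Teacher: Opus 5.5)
Your proposal is correct and follows essentially the same route as the paper: Newton's identities give $24\det(\rho^\Gamma_{AB})=1-6p_2+3p_2^2+8p_3-6p_4$, and $p_2,p_3,p_4$ expand into the same Bargmann-word combinations after the reductions $B_{22\cdots2}=2B_{02\cdots2}$. Your Step 1 is more careful than the paper, which merely asserts that the two-qubit determinant criterion is exact; you note explicitly that an entangled two-qubit state has a full-rank partial transpose with exactly one negative eigenvalue, which is what rules out $\det(\rho^\Gamma_{AB})=0$ for entangled states.
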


\begin{proof}
Let $\bsY=\rho^\Gamma_{AB}$, where $\Gamma=\t_A$ denotes partial transposition with respect to the qubit subsystem $A$. By Lemma~\ref{lem:1}, $\bsY$ is unitarily similar to $\bsX_2-\bsX_0$. Hence $p_k(\bsY)=\Tr{\bsY^k}=\Tr{(\bsX_2-\bsX_0)^k}$. For a $4\times 4$ matrix with $p_1(\bsY)=\Tr{\bsY}=1$, Newton's identities give
$$
 \det(\bsY)=e_4(\bsY)=\tfrac{1}{24}\Pa{p^4_1-6p^2_1p_2+3p^2_2+8p_1p_3-6p_4}.
$$
The required power sums are
\begin{eqnarray*}
p_2(\bsY)&=&B_{00},\\
p_3(\bsY)&=&-B_{000}+3B_{002}-B_{022}, \\
p_4(\bsY)&=&B_{0000}-4B_{0002}+4B_{0022}+2B_{0202}-2B_{0222}.
\end{eqnarray*}
For example, the simplifications in $p_2$ and $p_3$ use $\Tr{\bsX^m_2}=2\Tr{\rho^m_B}=2\Tr{\bsX_0\bsX^{m-1}_2}$. Substitution into the determinant formula yields that $24\det(\rho^\Gamma_{AB}) $ is equal to
\begin{eqnarray*}
&&1 -6B_{00}+3B^2_{00}+8(-B_{000}+3B_{002}-B_{022})\\
&&-6\Pa{B_{0000}-4B_{0002}+4B_{0022}+2B_{0202}-2B_{0222}}.
\end{eqnarray*}
For two-qubit states, the determinant criterion for the partial transpose is exact:
$\rho_{AB}$ is separable if and only if $\det(\rho^\Gamma_{AB})\geqslant0$. Therefore, $\rho_{AB}$ is entangled exactly when $\det(\rho^\Gamma_{AB})<0$, which is equivalent to Eq.~\eqref{eq:3}. Expanding the Bargmann words gives Eq.~\eqref{eq:4}.
\end{proof}

Only $\bsX_0=\rho_{AB}$ and $\bsX_2=\I_2\ot\rho_B$ appear in Eqs.~\eqref{eq:3} and \eqref{eq:4}. Although ordinary marginal-word invariants are defined using the full triple $(\bsX_0,\bsX_1,\bsX_2)$, the two-qubit entanglement criterion can be written using one marginal alone. By symmetry, an equivalent criterion exists in terms of $\bsX_0=\rho_{AB}$ and $\bsX_1=\rho_A\ot\I_2$.

The same framework also solves the substantially stronger LU-orbit classification problem.

\begin{thrm}[Two-qubit LU-orbit separation \cite{Zhang2025}]\label{th:2}
Let
\begin{eqnarray}\label{eq:5}
\cW&=&\Big\{00,01,02,000,012,0000,0001,0002,0012,\notag\\
&&~~~00012,001001,002002,012001,012002,\notag\\
&&~~~0120001,0120002,010010001,020020002\Big\}.
\end{eqnarray}
Two two-qubit states $\rho_{AB}$ and $\sigma_{AB}$ are LU equivalent if and only if
$B_{\mathbf{w}}(\rho)=B_{\mathbf{w}}(\sigma)$ for all $\mathbf{w}\in\cW$. 
\end{thrm}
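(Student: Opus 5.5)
The plan is to reduce the statement to the classical two-qubit LU-invariant picture. Write every two-qubit state in Fano form:
\begin{eqnarray*}
\rho_{AB}=\tfrac14\Pa{\I\ot\I+\bsa\cdot\boldsymbol{\sigma}\ot\I+\I\ot\bsb\cdot\boldsymbol{\sigma}+\textstyle\sum_{ij}T_{ij}\sigma_i\ot\sigma_j}.
\end{eqnarray*}
Under the local unitary $\bsU_A\ot\bsU_B$ the data transform as $(\bsa,\bsb,T)\mapsto(O_A\bsa,O_B\bsb,O_ATO_B^\t)$ with $O_A,O_B\in\mathrm{SO}(3)$. Two states are therefore LU equivalent if and only if their triples lie in the same $\mathrm{SO}(3)\times\mathrm{SO}(3)$ orbit.

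The necessity direction is immediate, since every $B_{\mathbf{w}}$ is LU invariant. For sufficiency, I would invoke the known complete set of two-qubit invariants (Makhlin; Grassl et al.). This set consists of:
\begin{itemize}
\item the singular-value data of $T$, i.e.\ $\Tr{(TT^\t)^k}$ for $k=1,2,3$ together with $\det T$;
\item $|\bsa|^2$ and $|\bsb|^2$;
\item $\bsa^\t(TT^\t)^k\bsa$ and $\bsb^\t(T^\t T)^k\bsb$ for $k=1,2$;
\item the mixed terms $\bsa^\t T\bsb$, $\bsa^\t TT^\t T\bsb$, and the chiral triple products $\det(\bsa,TT^\t\bsa,(TT^\t)^2\bsa)$ and its $\bsb$ analogue;
\item finer mixed quantities such as $\bsa^\t T(\bsb\times T^\t\bsa)$ needed in degenerate cases.
\end{itemize}
The generic version of this list has exactly $18$ elements, which matches $|\cW|$. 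That count is the first sanity check that each word should correspond to one Makhlin invariant.

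The main body of the argument is a triangular dictionary. Each $B_{\mathbf{w}}$ with $\mathbf{w}\in\cW$ is expanded using the Pauli product rule $\sigma_i\sigma_j=\delta_{ij}\I+\mathrm{i}\epsilon_{ijk}\sigma_k$ and the expressions $\bsX_1=\tfrac12(\I+\bsa\cdot\boldsymbol\sigma)\ot\I$, $\bsX_2=\I\ot\tfrac12(\I+\bsb\cdot\boldsymbol\sigma)$. The aim is to show that each word equals a distinct Makhlin invariant plus a polynomial in invariants of lower degree. The words would be handled in the following order:
\begin{itemize}
\item $01$ and $02$ give $|\bsa|^2$ and $|\bsb|^2$, and $00$ gives $\Tr{TT^\t}$.
\item $000$ gives $\det T$ together with $\bsa^\t T\bsb$, and $012$ isolates $\bsa^\t T\bsb$.
\item $0000$ and the quartic words give $\Tr{(TT^\t)^2}$.
\item Words such as $001001$ and $002002$ give $\bsa^\t TT^\t\bsa$ and $\bsb^\t T^\t T\bsb$.
\item The long words $010010001$ and $020020002$ encode the chiral determinants.
\end{itemize}
By induction on degree, the values of the words on $\cW$ then determine all Makhlin invariants, and hence the $\mathrm{SO}(3)\times\mathrm{SO}(3)$ orbit.

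The main obstacle is the completeness of the Makhlin-type list itself, especially on degenerate strata. These include degenerate singular values of $T$, $\bsa$ or $\bsb$ lying in degenerate singular subspaces, and $T$ singular, where the sign and orientation data carried by $\det T$ and the triple products matter. I would first diagonalize $T=\mathrm{diag}(t_1,t_2,t_3)$ by a signed-orthogonal choice. The residual stabilizer would then fix the components of $\bsa$ and $\bsb$ up to simultaneous sign flips, and these remaining sign ambiguities would be resolved by the mixed invariants $\bsa^\t T\bsb$, $\bsa^\t TT^\t T\bsb$ and the triple products. For the degenerate cases I would argue by compactness instead: the orbits of the compact group $\mathrm{SO}(3)^2$ are separated by polynomial invariants, and the invariant ring is generated in degree at most $9$. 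Since all words in $\cW$ have length at most $9$, this bound is where I would cite the prior computation of \cite{Zhang2025} rather than redo the case analysis.
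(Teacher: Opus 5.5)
The paper gives no argument for this theorem beyond citing \cite{Zhang2025}. Your overall route is the natural one: Fano coordinates $(\bsa,\bsb,T)$, Makhlin's complete set of 18 invariants, and a triangular dictionary from the 18 words to those invariants. Necessity, and your computations for $00,01,02,000,012$, are correct. But sufficiency is not established. The one nontrivial step is showing that the values $B_{\mathbf{w}}$, $\mathbf{w}\in\cW$, determine all of Makhlin's invariants, and you only assert it; the sketch is also wrong in places.

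\emph{Your list is not Makhlin's.}
\begin{itemize}
\item $\Tr{(TT^\t)^3}$ is redundant, by Newton's identities with $\det(TT^\t)=(\det T)^2$.
\item The invariant $\epsilon_{ijk}\epsilon_{lmn}a_ib_lT_{jm}T_{kn}=2\bsa^\t\op{cof}(T)\bsb$ and the mixed triple products are missing.
\item Your sample ``finer'' invariant vanishes identically: $\bsa^\t T(\bsb\times T^\t\bsa)=(T^\t\bsa)\cdot(\bsb\times T^\t\bsa)=0$.
\end{itemize}

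\emph{The assignments are off.} The words $001001$ and $002002$ have degree $6$, not $4$. Write $C=\sum_{i,j}T_{ij}\sigma_i\ot\sigma_j$. Then $C^2=\Tr{TT^\t}\,\I-2\sum_{m,n}\op{cof}(T)_{mn}\,\sigma_m\ot\sigma_n$ and $\op{cof}(T)\op{cof}(T)^\t=\op{adj}(TT^\t)$. So the leading parts of these words give $\bsa^\t(TT^\t)^2\bsa$ and $\bsb^\t(T^\t T)^2\bsb$, modulo products of lower invariants. The quartic block $0000,0001,0002,0012$ is what carries $\Tr{(TT^\t)^2}$, $\bsa^\t TT^\t\bsa$, $\bsb^\t T^\t T\bsb$ and $\bsa^\t\op{cof}(T)\bsb$. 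The word $00012$ carries $\bsa^\t TT^\t T\bsb$.

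\emph{An organizing fact you missed.} The $\bsX_k$ are Hermitian and $\bsX_1\bsX_2=\bsX_2\bsX_1$, and $\overline{B_{\mathbf{w}}(\rho)}=B_{\mathbf{w}}(\rho^\t)$ for every word. Twelve words in $\cW$ are cyclically equivalent to their reversals, so they give real invariants. Their degrees $2,2,2,3,3,4,4,4,4,5,6,6$ match exactly Makhlin's twelve invariants that are even under $\rho\mapsto\rho^\t$. The remaining six words, $012001,012002,0120001,0120002,010010001,020020002$, give complex values in general. The transpose-odd triple products, such as $[\bsa,TT^\t\bsa,(TT^\t)^2\bsa]$, must be read off from their imaginary parts.

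Even with correct assignments, you would still need to check two things degree by degree. First, the leading parts must be invertible modulo products of invariants already recovered. Second, all lower-order terms must lie in the algebra those recovered invariants generate. That computation is the proof, and it is missing.

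\emph{The fallback for degenerate strata does not work.} Compactness of $\mathrm{SO}(3)\times\mathrm{SO}(3)$ plus generation of the ring in degree $\leqslant 9$ shows only that the invariants of degree $\leqslant 9$ \emph{collectively} separate orbits. It says nothing about a particular 18-element subfamily, and the word lengths in $\cW$ are irrelevant to that question. Note that Theorem~\ref{th:3} adjoins $00120001$ and $00120002$ to obtain generators. So separation by $\cW$ is a genuine statement about a family not claimed to generate, and no degree bound can deliver it.

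The degenerate cases must come from Makhlin's separation theorem itself, which is the stratum-by-stratum sign analysis you outline. Invoke it as a black box. Then make sure your dictionary expresses each of his invariants polynomially in the word values, so it holds on every stratum; otherwise you must explain how they are recovered where leading coefficients vanish. Deferring this step to \cite{Zhang2025} amounts to citing the statement you are asked to prove.
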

Thus the $18$ LU Bargmann invariants indexed by $\cW$ form a finite LU-orbit separating family.

\begin{thrm}[Two-qubit polynomial invariant ring \cite{Zhang2026}]\label{th:3}
Let $\widetilde\cW=\cW\cup\set{00120001,00120002}$, where $\cW$ is given in Eq.~\eqref{eq:5}. The polynomial LU-invariant ring for two-qubit density operators is generated by $\set{B_{\mathbf{w}}:\mathbf{w}\in\widetilde\cW}$.
\end{thrm}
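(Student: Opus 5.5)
The approach is to pass from density operators to their Pauli data. Write a two-qubit state as
\begin{eqnarray*}
\rho_{AB}=\tfrac14\Pa{\I_4+\bsa\cdot\boldsymbol{\sigma}\ot\I_2+\I_2\ot\bsb\cdot\boldsymbol{\sigma}+\textstyle\sum_{ij}T_{ij}\sigma_i\ot\sigma_j}.
\end{eqnarray*}
Local unitaries act through $\rS\rO(3)\times\rS\rO(3)$ by $(\bsa,\bsb,T)\mapsto(O_1\bsa,O_2\bsb,O_1TO_2^\t)$. By the compactness argument of the introduction, the polynomial LU-invariant ring is the ring of $\rS\rO(3)\times\rS\rO(3)$-invariants of $(\bsa,\bsb,T)$. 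Its generators are classical and go back to Makhlin and King--Welsh: eighteen (or twenty-one, depending on the presentation) invariants. Typical ones are $\Tr{TT^\t}^k$ for $k=1,2,3$, $\det T$, $\bsa^\t(TT^\t)^k\bsa$, $\bsb^\t(T^\t T)^k\bsb$, $\bsa^\t T\bsb$-type mixed terms, and the triple-product invariants such as $\det[\bsa,TT^\t\bsa,(TT^\t)^2\bsa]$, where the $\epsilon$-tensor enters. I will take that generating list as a target rather than rederive it. The proof is then to express each target generator as a polynomial in the twenty words $B_{\mathbf{w}}$, $\mathbf{w}\in\widetilde\cW$. The reverse inclusion is immediate, because every $B_{\mathbf{w}}$ is LU invariant and is polynomial in the matrix entries of $\rho$.

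First, I expand each word of $\widetilde\cW$ in Pauli coordinates. I use $\bsX_1=\tfrac12(\I+\bsa\cdot\boldsymbol\sigma)\ot\I$ and $\bsX_2=\I\ot\tfrac12(\I+\bsb\cdot\boldsymbol\sigma)$, together with the product rule $\sigma_i\sigma_j=\delta_{ij}\I+\mathrm{i}\epsilon_{ijk}\sigma_k$. Each trace becomes an explicit polynomial in $(\bsa,\bsb,T)$. The low-degree words $00,01,02,000,012,0000,\ldots$ produce $|\bsa|^2$, $|\bsb|^2$, $\Tr{TT^\t}$, $\det T$, $\bsa^\t T\bsb$, and so on. Each such expression involves the target invariant plus lower-order terms that are already known. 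Second, I work through degrees in increasing order and triangularly solve for each Makhlin-type generator. At each stage I subtract the contributions of previously obtained invariants, which leaves a single new invariant with a nonzero coefficient. This is how Theorem~\ref{th:2} identifies the 18 words of $\cW$. The separating statement there already guarantees that the words of $\cW$ capture all generators up to the non-polynomial ambiguity.

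The main obstacle is exactly the gap between separation and generation. Theorem~\ref{th:2} only separates orbits, so some generators may be recovered merely as algebraic or rational functions of the $\cW$-invariants, for instance when sign or square-root information is involved. The likely culprits are the highest-degree $\epsilon$-contracted invariants, such as the triple products $\det[\bsa,TT^\t\bsa,(TT^\t)^2\bsa]$ and their $\bsb$-analogues, which are only determined up to sign by squared data. My plan is to show that $00120001$ and $00120002$ expand, modulo the subring generated by $\cW$, to nonzero multiples of precisely these two pseudo-scalar invariants. In these expansions the $\mathrm{i}\epsilon_{ijk}$ terms from the Pauli products survive in the cyclic trace. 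Once the leading term is verified to be a nonzero multiple of the triple product, the subring contains every classical generator. The remaining bookkeeping is to check the coefficients carefully, preferably symbolically, so that no generator's leading coefficient vanishes.
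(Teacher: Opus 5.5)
The paper does not prove this statement; it imports it from \cite{Zhang2026}. Your overall route is the natural one: pass to $(\bsa,\bsb,T)$, take an integrity basis for $\rS\rO(3)\times\rS\rO(3)$ as target, and express each member through the twenty words. But the step you single out as the crux is aimed at the wrong invariants and cannot succeed. Because $\rho_{AB}$, $\bsX_1$ and $\bsX_2$ are affine in $(\bsa,\bsb,T)$, $B_{\mathbf w}$ is a polynomial of degree at most $|\mathbf w|$. The two added words have length $8$, while $[\bsa,TT^\t\bsa,(TT^\t)^2\bsa]$ is homogeneous of degree $9$. So it cannot be the leading part of $B_{00120001}$ or $B_{00120002}$.

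Take degree-$9$ components modulo products of positive-degree invariants. This shows that an indecomposable degree-$9$ invariant can only be produced by words of length $\geqslant 9$, i.e. by $010010001$ or $020020002$, which are already in $\cW$. Their letter content (three $\bsX_1$'s, resp. three $\bsX_2$'s, separated by one, two and three copies of $\rho$) is exactly that of the two triple products. In fact $\cW$ matches Makhlin's eighteen invariants length by length ($2,2,2,3,3,4,4,4,4,5,6,6,6,6,7,7,9,9$), and those triple products are two of Makhlin's own. Makhlin's list is a separating set, not an integrity basis. The choice ``eighteen or twenty-one'' is not a matter of presentation; it is precisely the gap this theorem closes.

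What the new words must supply are degree-$8$ generators of multidegree $(2,1,5)$ and $(1,2,5)$ in $(\bsa,\bsb,T)$. This matches their letter content: two $1$'s and one $2$, resp. one $1$ and two $2$'s. Put $\bsM=TT^\t$. The invariants of multidegree $(2,1,5)$ are spanned by $[\bsa,\bsM^2\bsa,T\bsb]$, $[\bsa,\bsM\bsa,\bsM T\bsb]$ and $\Tr{\bsM}[\bsa,\bsM\bsa,T\bsb]$, subject to $[\bsa,\bsM^2\bsa,T\bsb]+[\bsa,\bsM\bsa,\bsM T\bsb]=\Tr{\bsM}[\bsa,\bsM\bsa,T\bsb]$. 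The only product of lower-degree invariants in that multidegree is $\Tr{\bsM}[\bsa,\bsM\bsa,T\bsb]$. Hence $[\bsa,\bsM^2\bsa,T\bsb]$ and its mirror $[\bsb,(T^\t T)^2\bsb,T^\t\bsa]$ lie outside the algebra generated by Makhlin's set. Each is nevertheless a rational function of that set: multiplying by $[\bsa,\bsM\bsa,T\bsb]$ gives a Gram determinant whose entries reduce, via Cayley--Hamilton, to Makhlin's scalar invariants. That, rather than a sign ambiguity of squared data, is the separation-versus-generation mechanism, and it is what your leading-term computation must target.

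Two further obligations remain open in your plan. First, pseudo-scalars live in $\op{Im}B_{\mathbf w}=(B_{\mathbf w}-B_{\overline{\mathbf w}})/2\mathrm{i}$, and the reversed word $\overline{\mathbf w}$ is not in $\widetilde\cW$. So you must show that the real part of every non-reversal-symmetric word already lies in the subalgebra. Second, the nonvanishing of each leading coefficient in the triangular elimination is asserted rather than checked, and that is the actual content of the theorem.
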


Theorems~\ref{th:1}-\ref{th:3} establish the exceptional completeness of the two-qubit case. Within one trace-word framework:
(i) separability is decided by a single polynomial inequality; (ii) a finite family separates all LU orbits; and (iii) a finite extension generates the full polynomial LU-invariant ring. These three conclusions are logically distinct. Their simultaneous validity is a special low-dimensional phenomenon and does not persist for qubit-qutrit states.

\section{The qubit-qutrit system}\label{sect:4}

We now consider $(d_A,d_B)=(2,3)$.  Because the partial-transpose spectrum of every qubit–qudit state is determined by ordinary marginal-word Bargmann invariants, and because PPT is equivalent to separability in $2\ot3$ system, these invariants still decide entanglement completely. Unlike the two-qubit case, more than one characteristic-polynomial coefficient is required.

\subsection{Separability criterion}

\begin{thrm}\label{th:4}
Let $\rho_{AB}\in\rD(\bbC^2\ot\bbC^3)$ and write
$$
\bsX_0=\rho_{AB},\bsX_1=\rho_A\ot\I_3,\bsX_2=\I_2\ot\rho_B.
$$
Introduce the following ordinary marginal-word polynomials:
\begin{itemize}
\item $b_1:=1$;
\item $b_2:=B_{00}$;
\item $b_3:=3B_{002}-B_{000}-B_{022}$;
\item $b_4:=B_{0000}-4B_{0002}+4B_{0022}+2B_{0202}-2B_{0222}$;
\item
$b_5:=-B_{00000}+5B_{00002}-5B_{00022}-5B_{00202}+5B_{00222}+5B_{02022}-3B_{02222}$;
\item
$b_6:=B_{000000}-6B_{000002}+6B_{000022}+6B_{000202}+3B_{002002}-6B_{000222}-6B_{002022}-6B_{002202}-2B_{020202}+6B_{002222}+6B_{020222}+3B_{022022}-4B_{022222}$.
\end{itemize}
These quantities are precisely the partial-transpose moments: $b_k=\Tr{[\rho^\Gamma_{AB}]^k}$ for $k=1,\ldots,6$. Define $\cE_k:=k!e_k(\rho^\Gamma_{AB})$ and
\begin{itemize}
\item $\cE_3 = 1-3b_2+2b_3$;
\item $\cE_4= 1-6b_2+3b^2_2+8b_3-6b_4$;
\item $\cE_5= 1-10b_2+15b_2^2 +20b_3-20b_2b_3 -30b_4+24b_5$;
\item $\cE_6=1-15b_2+45b_2^2-15b_2^3 +40b_3-120b_2b_3+40b_3^2 -90b_4+90b_2b_4
+144b_5-120b_6$.
\end{itemize}
Then $\rho_{AB}$ is separable if and only if $\cE_k\geqslant0$ for $k=3,4,5,6$. Equivalently, $\rho_{AB}$ is entangled if and only if
\begin{eqnarray}\label{eq:6}
\min(\cE_3,\cE_4,\cE_5,\cE_6)<0.
\end{eqnarray}
\end{thrm}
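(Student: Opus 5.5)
The plan has three steps: identify the $b_k$ as partial-transpose moments, convert them to the $\cE_k$ through Newton's identities, and translate $\cE_k\geqslant0$ into PPT and hence separability.

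\textbf{Step 1: the $b_k$ are the moments.} By Lemma~\ref{lem:1}, $\bsY=\rho^\Gamma_{AB}$ is unitarily similar to $\bsX_2-\bsX_0$. Hence, by Eq.~\eqref{eq:PTBarg},
\[
p_k(\bsY)=\sum_{\mathbf{w}\in\set{0,2}^k}(-1)^{N_0(\mathbf{w})}B_{\mathbf{w}} .
\]
For $k\leqslant 6$ I would enumerate the $2^k$ words and group them into cyclic classes. Each class contributes its size times a representative word; for example, $\set{0,2}^6$ contains the necklaces $002002$ and $020202$ with class sizes $3$ and $2$. Pure-$2$ words are then eliminated using $\Tr{\bsX_2^m}=2\Tr{\rho_B^m}=2B_{02\cdots2}$. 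This is what produces the coefficients such as $-3B_{02222}$ in $b_5$ and $-4B_{022222}$ in $b_6$: for instance, $6B_{022222}$ from the class of $022222$ combined with $-2\cdot 2\cdot$ from the pure word. For $b_6$ the pure word gives $2B_{022222}$ against $-6B_{022222}$, a net of $-4$. Also $b_1=\Tr{\bsY}=1$. The bookkeeping is routine but is the step most prone to error, so I would double-check it by confirming each coefficient sum against the count of words in each cyclic class.

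\textbf{Step 2: the elementary symmetric functions.} Newton's identities (Lemma~\ref{lem:2}) with $p_1=1$ give $k!\,e_k$ as the stated polynomials in $b_2,\dots,b_6$. Since $\bsY$ is $6\times6$, these are $\cE_3,\dots,\cE_6$. Each can be verified against the exponential formula $k!\,e_k=\sum_{\lambda\vdash k}\pm\frac{k!}{z_\lambda}p_\lambda$, with sign $(-1)^{k-\ell(\lambda)}$. Note also $e_1=1$ and $2e_2=1-b_2\geqslant0$, because $b_2=\Tr{\rho_{AB}^2}\leqslant1$.

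\textbf{Step 3: positivity and the main obstacle.} $\bsY$ is Hermitian, so by Proposition~\ref{prop:1} it is positive semidefinite if and only if $e_k(\bsY)\geqslant0$ for all $k$. The cases $k=0,1,2$ hold automatically, which leaves exactly $\cE_3,\dots,\cE_6\geqslant0$. The Peres--Horodecki theorem in $2\ot3$ then equates PPT with separability, which gives the criterion and its negation \eqref{eq:6}.

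The main obstacle is really Step 1: getting the $b_5$ and $b_6$ coefficients right after cyclic reduction. I would handle it with a generating-function check: set $\bsX_0=t\I$ and $\bsX_2=s\I$ formally, and compare both sides with $\Tr{(s-t)^k\I}$.
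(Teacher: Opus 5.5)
Your proposal is correct and follows the paper's proof step for step. The steps are: Lemma~\ref{lem:1} gives $p_k(\bsY)=\Tr{(\bsX_2-\bsX_0)^k}$; cyclic grouping with pure-$2$ words eliminated gives $b_k$; Newton's identities with $p_1=1$ give $\cE_k$; $e_1$ and $e_2$ are automatically nonnegative; then Proposition~\ref{prop:1} and Peres--Horodecki finish it. One caution on your sanity check: the elimination $B_{2\cdots2}=2B_{02\cdots2}$ is consistent with the scalar substitution only when $s=2t$ (as for $\rho=\I_6/6$). Also, a scalar check cannot detect coefficients misassigned between necklaces with the same letter content (e.g.\ $000022$ versus $000202$), so it is the explicit necklace count that actually certifies $b_5$ and $b_6$.
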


\begin{proof}
Set $\bsY=\rho^\Gamma_{AB}$. By Lemma~\ref{lem:1}, $\bsY$ is unitarily similar to $\bsX_2-\bsX_0$, and therefore
$p_k(\bsY)=\Tr{\bsY^k}=\Tr{(\bsX_2-\bsX_0)^k}$. Expanding these powers as noncommutative words and grouping cyclically equivalent terms gives $p_k(\bsY)=b_k$ for $k=1,\ldots,6$. Let $e_k(\bsY)$ denote the elementary symmetric polynomials of the six eigenvalues of$\bsY$. Newton’s identities yield
$$
k! e_3(\bsY)=\cE_k,\quad k=3,4,5,6.
$$
The first two conditions require no additional inequalities. Indeed, $e_1(\bsY)=\Tr{\bsY}=1$, and $e_2(\bsY)=\frac{1-p_2(\bsY)}2=\frac{1-\Tr{\rho^2_{AB}}}2\geqslant0$, because partial transposition preserves the Hilbert-Schmidt norm and $\Tr{\rho^2_{AB}}\leqslant1$. By Proposition~\ref{prop:1}, the Hermitian matrix $\bsY$ is positive semidefinite if and only if
$e_k(\bsY)\geqslant0$ for $k=1,\ldots,6$. Consequently, $\rho^\Gamma_{AB}\geqslant\mathbf{0}$ if and only if $\cE_k\geqslant0$ for $k=3,4,5,6$. Finally, the Peres-Horodecki theorem gives $\rho_{AB}$ is separable if and only if $\rho^\Gamma_{AB}\geqslant\mathbf{0}$ for $2\ot3$ states. This proves the result.
\end{proof}

Theorem~\ref{th:4} is a complete separability criterion expressed entirely in terms of the original state and its marginal. No matrix element of the partial transpose appears on the right-hand side. In contrast with the two-qubit case, no single determinant inequality suffices: positivity of a $6\times 6$ partial transpose requires simultaneous control of several characteristic-polynomial coefficients.

\subsection{Failure of local-unitary orbit separation}

The ability to decide separability does not imply complete LU-orbit discrimination.

\begin{exam}\label{exam:2ot3}
Let $\cH_A=\bbC^2$  and $\cH_B=\bbC^3$ with computational bases $\set{\ket{0},\ket{1}}_A$ and $\set{\ket{0},\ket{1},\ket{2}}_B$, respectively. Define two traceless Hermitian operators $\bsH,\bsK$ on $\bbC^2\ot\bbC^3$ by
\begin{eqnarray*}
\bsH&=&\sigma_z\ot\Pa{\proj{0}-\proj{2}},\\
\bsK&=&\out{00}{11}+\out{11}{00}+\out{01}{12}+\out{12}{01}.
\end{eqnarray*}
For $\varepsilon\in(0,1)$, set
\begin{eqnarray}\label{eq:7}
\rho(\bsH)=\tfrac{\I_6+\varepsilon\bsH}6,\quad \rho(\bsK)=\tfrac{\I_6+\varepsilon\bsK}6.
\end{eqnarray}
These states satisfy the following properties:
(i) both states are full-rank density operators;
(ii) both states have maximally mixed marginals;
(iii) they have the same global spectrum;
(iv) every ordinary marginal-word Bargmann invariant takes the same value on the two states;
(v) the two states are not LU equivalent.
\end{exam}

To verify these claims, first observe that $\op{Spec}(\bsH)=\op{Spec}(\bsK)=\set{1,1,0,0,-1,-1}$. Hence  $\op{Spec}[\rho(\bsH)]=\op{Spec}[\rho(\bsK)]=\set{\frac{1+\varepsilon}6,\frac{1+\varepsilon}6,\frac16,\frac16,\frac{1-\varepsilon}6,\frac{1-\varepsilon}6}$. For $\varepsilon\in(0,1)$, all six eigenvalues are strictly positive, so both states are full rank and globally unitarily equivalent.

Moreover, $\ptr{A}{\bsH}=\ptr{B}{\bsH}=\ptr{A}{\bsK}=\ptr{B}{\bsK}=\mathbf{0}$. Consequently, $\rho_A(\bsH)=\rho_A(\bsK)=\frac{\I_2}2$ and $\rho_B(\bsH)=\rho_B(\bsK)=\frac{\I_3}3$. For either state, $\bsX_1=\frac{\I_6}2$ and $\bsX_2=\frac{\I_6}3$. Hence, for every word $\mathbf{w}\in\set{0,1,2}^*$,
\begin{eqnarray*}
B_{\mathbf{w}}(\rho)=2^{-N_1(\mathbf{w})}3^{-N_2(\mathbf{w})}\Tr{\rho^{N_0(\mathbf{w})}}.
\end{eqnarray*}
Since $\rho(\bsH)$ and $\rho(\bsK)$ are isospectral, $\Tr{\rho(\bsH)^k}=\Tr{\rho(\bsK)^k}$ for every $k\geqslant0$, implying $B_{\mathbf{w}}(\rho(\bsH))=B_{\mathbf{w}}(\rho(\bsK))$ for every $\mathbf{w}\in\set{0,1,2}^*$. 

It remains to show that the two states are not LU equivalent. The operator $\bsH$ is a single nonzero tensor product, so its operator Schmidt-rank is $1$. On the other hand, $\bsK$ can be written as $\bsK=\out{0}{1}\ot (\out{0}{1}+\out{1}{2})+\out{1}{0}\ot (\out{1}{0}+\out{2}{1})$. The two qubit factors are linearly independent, as are the two qutrit factors, so
the operator Schmidt-rank of $\bsK$ is $2$. If $\rho(\bsH)$ and $\rho(\bsK)$ were LU equivalent, the identity components would match and one would obtain
$$
\rho(\bsK)=(\bsU_A\ot\bsU_B)\rho(\bsH)(\bsU_A\ot\bsU_B)^\dagger,
$$
for some unitaries $\bsU_A\in\rU(2)$ and $\bsU_B\in\rU(3)$, contradicting preservation of
operator-Schmidt rank. Hence $\rho(\bsH)$ is not LU equivalent to $\rho(\bsK)$.

This example proves that equality of all ordinary marginal-word Bargmann invariants is strictly weaker
than LU equivalence in the qubit–qutrit system. The failure is not caused by truncating the family at a finite
word length: the two states agree on words of every length. A fiber of the ordinary Bargmann map may
therefore contain more than one LU orbit.

\subsection{A polynomial invariant outside the Bargmann algebra}

The same pair shows that ordinary marginal-word invariants do not generate the complete polynomial LU invariant
ring. Let
$$
\cR_{2,3}=\bbC[\Herm_1(\bbC^2\ot\bbC^3)]^{\U(2)\times\U(3)}
$$
be the ring of polynomial functions on trace-one Hermitian operators that are invariant under local-unitary conjugation, let
$$
\cB_{2,3}=\bbC[B_{\mathbf{w}}:\mathbf{w}\in\set{0,1,2}^*]
$$
be the algebra generated by all ordinary marginal-word Bargmann invariants. Since every $B_{\mathbf{w}}$ is LU-invariant, $\cB_{2,3}\subset \cR_{2,3}$. The inclusion is strict.

Choose Hilbert-Schmidt orthonormal bases $\set{\bsS_i}^3_{i=1}$ of trace-less Hermitian $2\times 2$ matrices and $\set{\bsT_j}^8_{j=1}$ of traceless Hermitian $3\times 3$ matrices. The real $3\times 8$ correlation matrix of a state $\rho$ is $\bsC(\rho)=(C_{ij}(\rho))$ whose entries are determined by $C_{ij}(\rho)=\Tr{\rho(\bsS_i\ot\bsT_j)}$. Under local unitaries, $\bsC\mapsto\bsO_A\bsC\bsO^\t_B$ for orthonormal matrices $\bsO_A\in\rO(3)$ and $\bsO_B\in\rO(8)$. The squared singular values of $\bsC$, equivalently the eigenvalues of $\bsC\bsC^\t$, are therefore LU invariants. In particular the
quartic polynomial
\begin{eqnarray}\label{eq:J2}
J_2(\rho) :=e_2\Pa{\bsC\bsC^\t}=\frac12\Br{\Inner{\bsC}{\bsC}^2 - \Inner{\bsC\bsC^\t}{\bsC\bsC^\t}}
\end{eqnarray}
is a degree-four polynomial LU invariant, independent of the chosen orthonormal operator bases.

\begin{thrm}\label{th:2to3-strict}
For the qubit-qutrit system, $\cB_{2,3}\subsetneq \cR_{2,3}$. In particular, even the infinite family of all ordinary
marginal-word Bargmann invariants does not generate the complete polynomial LU-invariant ring.
\end{thrm}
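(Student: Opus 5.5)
The strategy is to use the pair $\rho(\bsH),\rho(\bsK)$ from Example~\ref{exam:2ot3} as a witness. Every element of $\cB_{2,3}$ is a polynomial in the generators $B_{\mathbf{w}}$, and these generators take identical values on the two states by property (iv). Hence every $f\in\cB_{2,3}$ satisfies $f(\rho(\bsH))=f(\rho(\bsK))$. It therefore suffices to exhibit one element of $\cR_{2,3}$ that distinguishes the two states. I would take the quartic $J_2$ of Eq.~\eqref{eq:J2}.

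First I check that $J_2\in\cR_{2,3}$. The entries $C_{ij}(\rho)=\Tr{\rho(\bsS_i\ot\bsT_j)}$ are linear in $\rho$, so $J_2$ is a homogeneous quartic polynomial on $\Herm_1(\bbC^2\ot\bbC^3)$. Under local unitaries, $\bsC\mapsto\bsO_A\bsC\bsO_B^\t$. This leaves the spectrum of $\bsC\bsC^\t$ unchanged, and hence leaves $e_2(\bsC\bsC^\t)$ unchanged. Independence from the choice of orthonormal bases follows from the same orthogonal-change argument.

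Next I evaluate $J_2$ on the two states. The identity part contributes nothing, since $\Tr{\bsS_i\ot\bsT_j}=0$. This gives $C_{ij}(\rho(\bsH))=\tfrac{\varepsilon}{6}\Tr{\bsH(\bsS_i\ot\bsT_j)}$, and similarly for $\bsK$.

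For $\bsH=\sigma_z\ot(\proj{0}-\proj{2})$, the matrix $\bsC$ is $\tfrac{\varepsilon}{6}$ times the outer product of the coordinate vectors of $\sigma_z$ and of $\proj{0}-\proj{2}$ in the two bases. So $\bsC$ has rank one, $\bsC\bsC^\t$ has a single nonzero eigenvalue, and $J_2(\rho(\bsH))=0$.

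For $\bsK$, set $\bsP=\out{0}{1}+\out{1}{2}$ and use $\out{0}{1}=\tfrac12(\sigma_x+\mathrm{i}\sigma_y)$. This rewrites the decomposition from the example in the Hermitian form
$$\bsK=\tfrac12\,\sigma_x\ot(\bsP+\bsP^\dagger)+\tfrac12\,\sigma_y\ot\mathrm{i}(\bsP-\bsP^\dagger).$$
Both qutrit factors are traceless and Hermitian. They are Hilbert--Schmidt orthogonal, because the cross term reduces to traces of $\bsP^2$ and $(\bsP^\dagger)^2$, which vanish. Each has squared norm $2\Tr{\bsP^\dagger\bsP}=4$.

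Now choose the bases conveniently, which is allowed by basis independence:
\begin{itemize}
\item $\bsS_1=\sigma_x/\sqrt2$ and $\bsS_2=\sigma_y/\sqrt2$;
\item $\bsT_1=(\bsP+\bsP^\dagger)/2$ and $\bsT_2=\mathrm{i}(\bsP-\bsP^\dagger)/2$.
\end{itemize}
In these bases, $\bsC(\rho(\bsK))$ has exactly two nonzero entries, $C_{11}$ and $C_{22}$. A direct computation should give both equal to $\tfrac{\varepsilon}{6}\cdot\tfrac12\cdot\sqrt2\cdot 2\cdot 2=\tfrac{\sqrt2\,\varepsilon}{3}$. The precise constant is irrelevant; only nonvanishing matters. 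Then $\bsC\bsC^\t$ has two equal nonzero eigenvalues $c^2$, so $J_2(\rho(\bsK))=c^4>0$ for every $\varepsilon\in(0,1)$.

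Since $J_2(\rho(\bsH))=0\neq J_2(\rho(\bsK))$, the invariant $J_2$ cannot lie in $\cB_{2,3}$, and the inclusion $\cB_{2,3}\subset\cR_{2,3}$ is strict.

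The main obstacle is the rank-two computation for $\bsK$. One must confirm that the two Hermitian qutrit factors are genuinely orthogonal, so that the two singular values of $\bsC$ are both nonzero rather than collapsing to rank one. Tracking the normalization constants carefully is secondary, since only the sign of $J_2$ is needed.
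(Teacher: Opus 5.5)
Your proposal is correct and is essentially the paper's proof: the pair $\rho(\bsH),\rho(\bsK)$ agrees on every $B_{\mathbf{w}}$, while $J_2$ vanishes on the rank-one correlation matrix of $\rho(\bsH)$ and is strictly positive on the rank-two correlation matrix of $\rho(\bsK)$, so $J_2\in\cR_{2,3}\setminus\cB_{2,3}$. The one slip is a harmless constant: in your bases $\bsK=\sqrt2\,(\bsS_1\ot\bsT_1+\bsS_2\ot\bsT_2)$ and $\Tr{(\boldsymbol{P}+\boldsymbol{P}^\dagger)\bsT_1}=2$ rather than $4$, so $C_{11}=C_{22}=\frac{\sqrt2\,\varepsilon}{6}=\frac{\varepsilon}{3\sqrt2}$, which matches the paper's singular values and gives $J_2(\rho(\bsK))=\frac{\varepsilon^4}{324}$.
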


\begin{proof}
As computed in Appendix~\ref{app:B}, the correlation matrix of $\rho(\bsH)$ has a single nonzero singular value $\frac\varepsilon3$, so $J_2(\rho(\bsH))=0$. The correlation matrix of $\rho(\bsK)$ has two equal nonzero singular values $\frac{\varepsilon}{3\sqrt{2}}$, so $J_2(\rho(\bsK))=\frac{\varepsilon^4}{324}>0$. Thus $J_2$ separates the two states of Example~\ref{exam:2ot3}. Every polynomial in the ordinary invariants $B_{\mathbf{w}}$ takes the same value on the pair, so $J_2\notin\cB_{2,3}$.
\end{proof}

\begin{cor}
The complete infinite family $\set{B_{\mathbf{w}}:\mathbf{w}\in\set{0,1,2}^*}$ does not separate qubit-qutrit LU orbits.
\end{cor}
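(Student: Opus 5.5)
The corollary follows from Example~\ref{exam:2ot3} together with the definition of LU-orbit separation, so the plan is to exhibit two inequivalent orbits on which every $B_{\mathbf{w}}$ agrees. Fix any $\varepsilon\in(0,1)$ and take the pair $\rho(\bsH)$, $\rho(\bsK)$ of Eq.~\eqref{eq:7}.

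\textbf{Step 1: equality of all invariants.} Both states have maximally mixed marginals, since $\ptr{A}{\bsH}$, $\ptr{B}{\bsH}$, $\ptr{A}{\bsK}$ and $\ptr{B}{\bsK}$ all vanish. By the collapse formula Eq.~\eqref{eq:genericexpression}, each $B_{\mathbf{w}}$ therefore equals $2^{-N_1(\mathbf{w})}3^{-N_2(\mathbf{w})}\Tr{\rho^{N_0(\mathbf{w})}}$. The two states share the spectrum $\set{1,1,0,0,-1,-1}$ of $\bsH$ and $\bsK$, shifted and scaled, so these moments coincide for every word.

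\textbf{Step 2: the orbits are distinct.} Suppose $\rho(\bsK)=(\bsU_A\ot\bsU_B)\rho(\bsH)(\bsU_A\ot\bsU_B)^\dagger$. Subtracting the identity part and dividing by $\varepsilon$ gives $\bsK=(\bsU_A\ot\bsU_B)\bsH(\bsU_A\ot\bsU_B)^\dagger$. This is impossible, because local conjugation preserves operator-Schmidt rank while $\bsH$ has rank $1$ and $\bsK$ has rank $2$.

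Alternatively, one can invoke Theorem~\ref{th:2to3-strict}. The invariant $J_2$ is LU invariant and takes the values $0$ and $\varepsilon^4/324$ on the two states, so they lie in different orbits.

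Combining the two steps, the pair lies in a single fiber of the ordinary Bargmann map $\mathfrak{B}$ but in two distinct LU orbits, which is the claim. There is no real obstacle. The only point needing care is the operator-Schmidt rank of $\bsK$, which requires checking that the factors $\out{0}{1},\out{1}{0}$ and $\out{0}{1}+\out{1}{2}$, $\out{1}{0}+\out{2}{1}$ are linearly independent; this check was already done in the verification of Example~\ref{exam:2ot3}.
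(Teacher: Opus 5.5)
Your proposal is correct and is essentially the paper's own argument. The paper reads the corollary off Example~\ref{exam:2ot3}: the collapse formula on the locally maximally mixed, isospectral pair $\rho(\bsH),\rho(\bsK)$ gives agreement of every $B_{\mathbf{w}}$. Distinctness of the LU orbits then follows from the operator-Schmidt rank mismatch, or equivalently from the invariant $J_2$ of Theorem~\ref{th:2to3-strict}.
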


\begin{cor}
Increasing the maximum word length cannot restore completeness. The states $\rho(\bsH)$ and $\rho(\bsK)$ agree on all words of all lengths.
\end{cor}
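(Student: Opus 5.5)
The statement to prove is the last corollary: increasing the maximum word length cannot restore completeness, because $\rho(\bsH)$ and $\rho(\bsK)$ agree on all words of every length. I will prove this directly from Example~\ref{exam:2ot3} together with the collapse formula \eqref{eq:genericexpression}, and no new invariant-theoretic machinery should be needed.

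First, fix an arbitrary maximum word length $L$ and consider the truncated family $\set{B_{\mathbf{w}}:|\mathbf{w}|\leqslant L}$, or any finite subfamily of words. Property (ii) of Example~\ref{exam:2ot3} gives $\bsX_1=\I_6/2$ and $\bsX_2=\I_6/3$ for both states. Consequently every word evaluates to $2^{-N_1(\mathbf{w})}3^{-N_2(\mathbf{w})}\Tr{\rho^{N_0(\mathbf{w})}}$.

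Second, I invoke property (iii), isospectrality. Since $\op{Spec}[\rho(\bsH)]=\op{Spec}[\rho(\bsK)]$, the moments $\Tr{\rho^k}$ coincide for all $k\geqslant0$. The two states therefore agree on every word, with no dependence on $L$. I would note explicitly that the bound on the exponent $N_0(\mathbf{w})$ is irrelevant here, because equality of all power sums holds uniformly in $k$.

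Third, I appeal to property (v), which the paper derives from the operator-Schmidt-rank argument: the two states are not LU equivalent. Hence for every $L$, the length-$\leqslant L$ family takes identical values on two distinct LU orbits, so no truncation, and equally the full union over all $L$, separates orbits. Replacing the family by the algebra it generates changes nothing, since polynomials in equal values remain equal; this repeats the argument of Theorem~\ref{th:2to3-strict}.

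I expect no serious obstacle. The only point requiring care is that the counterexample is independent of $L$. This is exactly what the collapse formula delivers, because the states are fixed before $L$ is chosen.
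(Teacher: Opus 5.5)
Your proposal is correct and matches the paper's reasoning. The maximally mixed marginals make $\bsX_1=\I_6/2$ and $\bsX_2=\I_6/3$ scalar, so every word reduces to $2^{-N_1(\mathbf{w})}3^{-N_2(\mathbf{w})}\Tr{\rho^{N_0(\mathbf{w})}}$; isospectrality then gives equality for all words regardless of length, and the operator-Schmidt-rank argument of Example~\ref{exam:2ot3} shows that the two states lie in distinct LU orbits.
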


\begin{cor}
Let $\mathfrak{B}(\rho):=(B_{\mathbf{w}}(\rho))_{\mathbf{w}\in\set{0,1,2}^*}$ denote the ordinary Bargmann map. For some qubit–qutrit states, one has
$\cO_{\mathrm{LU}}(\rho)\subsetneq  \mathfrak{B}^{-1}(\mathfrak{B}(\rho))$: a fiber of $\mathfrak{B}$ may contain more than one LU orbit. Here $\cO_{\mathrm{LU}}(\rho)$ means the LU-orbit through $\rho$.
\end{cor}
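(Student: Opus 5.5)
The statement is the last corollary: some fiber of the ordinary Bargmann map $\mathfrak{B}$ contains more than one LU orbit. I will prove it by exhibiting a qubit--qutrit state whose fiber provably contains two distinct LU orbits, using the pair from Example~\ref{exam:2ot3}.

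Take $\rho=\rho(\bsH)$ with some fixed $\varepsilon\in(0,1)$. Any state is LU equivalent to itself, so $\cO_{\mathrm{LU}}(\rho)\subseteq\mathfrak{B}^{-1}(\mathfrak{B}(\rho))$. The content of the corollary is therefore strictness, and for that it suffices to find a point of the fiber that lies outside the orbit.

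The candidate is $\rho(\bsK)$. Item (iv) of Example~\ref{exam:2ot3}, which rests on Eq.~\eqref{eq:genericexpression}, maximally mixed marginals and isospectrality, gives $B_{\mathbf{w}}(\rho(\bsK))=B_{\mathbf{w}}(\rho(\bsH))$ for every word $\mathbf{w}\in\set{0,1,2}^*$. Hence $\mathfrak{B}(\rho(\bsK))=\mathfrak{B}(\rho(\bsH))$, so $\rho(\bsK)$ lies in the fiber.

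It remains to show that $\rho(\bsK)\notin\cO_{\mathrm{LU}}(\rho(\bsH))$. This is item (v), and there are two independent arguments for it.
\begin{itemize}
\item \emph{Operator-Schmidt rank.} An LU conjugation fixes $\I_6/6$, so it would have to map $\varepsilon\bsH$ to $\varepsilon\bsK$. This is impossible, since LU conjugation preserves operator-Schmidt rank and these ranks are $1$ and $2$.
\item \emph{The invariant $J_2$.} By Theorem~\ref{th:2to3-strict}, $J_2$ is an LU invariant with $J_2(\rho(\bsH))=0\neq\varepsilon^4/324=J_2(\rho(\bsK))$.
\end{itemize}
Either argument shows the two states lie on different LU orbits. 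The fiber therefore contains at least two orbits, and the inclusion is strict.

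The only step with real content is the verification of item (v), i.e.\ that the operator-Schmidt ranks are exactly $1$ and $2$, or equivalently the singular-value computation in Appendix~\ref{app:B}. Since both are already established earlier in the paper, the corollary follows immediately, with no further obstacle.
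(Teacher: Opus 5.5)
Your proposal is correct and follows the paper's own route. The corollary is read off from Example~\ref{exam:2ot3}: $\rho(\bsK)$ lies in the Bargmann fiber of $\rho(\bsH)$ by item (iv), and it lies outside the LU orbit of $\rho(\bsH)$ by the operator-Schmidt-rank argument, or equivalently by the $J_2$ separation in Theorem~\ref{th:2to3-strict}.
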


\subsection{A counterexample with nondegenerate marginals}

The obstruction is not confined to the locally maximally mixed sector. There exist states with nondegenerate,
non-maximally mixed marginals whose ordinary Bargmann invariants agree, but which are not LU equivalent.

\begin{exam}
Use the ordered product basis $\set{\ket{00},\ket{01},\ket{02},\ket{10},\ket{11},\ket{12}}$, define $\rho_{AB} = \tfrac1{100}\Pa{\sum^1_{i,j=0}\out{i}{j}\ot\bsB_{ij}}$, where
\begin{eqnarray*}
\bsB_{01}=\bsB_{10}= \Pa{\begin{array}{ccc}
  0 & 1 & 1\\
  1 & 0 & 1 \\
  1 & 1 & 0
\end{array}}
\end{eqnarray*}
and $\bsB_{00}=\diag(10,12,13),\bsB_{11}=\diag(14,16,35)$. It is easily seen that $\rho_{AB}$ is a legitimate full-rank density operator. Taking the partial trace over the second system gives
$$
\rho_A=\tfrac1{100}\diag(35,65)\text{ and }\rho_B=\tfrac1{100}\diag(24,28,48).
$$
Both marginals are nondegenerate and are not completely mixed: $\rho_{AB}\in\sC(\rho_A,\rho_B)$. Define $\bsU=\diag(\mathrm{i},1,1,1,1,1)$ in the same product basis, so that $\bsU\ket{00}=\mathrm{i}\ket{00}$, and $\bsU$ acts as the identity on the other five standard basis vectors. We have $[\bsU,\rho_A\ot\I_B]=[\bsU,\I_A\ot\rho_B]=\mathbf{0}$ because all $\bsU,\rho_A\ot\I_B,\I_A\ot\rho_B$ are diagonal in the same product basis. Notice that $\bsU$ is not a local unitary in $\rU(2)\ot\rU(3)$: its phase pattern cannot be written in the form $e^{\mathrm{i}(\alpha_j+\beta_k)}$.

The transformed state $\rho'_{AB}=\bsU\rho_{AB}\bsU^\dagger$ is
$$
\rho'_{AB}=\tfrac1{100}\sum^1_{i,j=0}\out{i}{j}\ot\bsB'_{ij},\bsB'_{01}=\Pa{\begin{array}{ccc}
  0 & \mathrm{i} & \mathrm{i}\\
  1 & 0 & 1 \\
  1 & 1 & 0
\end{array}},
$$
and $\bsB'_{10}=(\bsB'_{01})^\dagger,\bsB'_{kk}=\bsB_{kk}$ for $k=0,1$. Since $\bsU$ changes only their phases and leaves all diagonal entries unchanged, $\ptr{B}{\rho'_{AB}}=\rho_A$ and $\ptr{A}{\rho'_{AB}}=\rho_B$. Hence $\rho'_{AB}\in\sC(\rho_A,\rho_B)$.

In what follows, we show that the two states $\rho'_{AB}$ and $\rho_{AB}$ are not locally unitarily equivalent. Indeed,  suppose, toward a contradiction, that there exist local unitaries $\bsU_A\in\rU(2)$ and $\bsU_B\in\rU(3)$ such that $\rho'_{AB}=(\bsU_A\ot\bsU_B)\rho_{AB}(\bsU_A\ot\bsU_B)^\dagger$. Taking partial traces gives $\rho_X=\bsU_X\rho_X\bsU^\dagger_X$ where $X=A,B$. Because both $\rho_A$ and $\rho_B$ have nondegenerate spectra, any unitary fixing them must be diagonal in their eigenbases. Therefore,
$$
\bsU_A = \diag(e^{\mathrm{i}\alpha_0},e^{\mathrm{i}\alpha_1})\text{ and }\bsU_B=\diag(e^{\mathrm{i}\beta_0},e^{\mathrm{i}\beta_1},e^{\mathrm{i}\beta_2}).
$$
Set $\theta_{jk}:=\alpha_j+\beta_k$, where $j=0,1$ and $k=0,1,2$. Then $\Innerm{jk}{\rho'}{pq}=e^{\mathrm{i}(\theta_{jk}-\theta_{pq})}\Innerm{jk}{\rho}{pq}$.
\begin{itemize}
\item $\Innerm{00}{\rho'}{11}=e^{\mathrm{i}\frac\pi2}\Innerm{00}{\rho}{11}$ means $\theta_{00}-\theta_{11}\equiv \frac\pi2\pmod{2\pi}$;
\item Similarly, $\theta_{00}-\theta_{12}\equiv\frac\pi2\pmod{2\pi}$;
\item All the other listed coherences remain positive real numbers. Hence
\begin{eqnarray*}
\theta_{01}-\theta_{10}\equiv0\pmod{2\pi},\\
\theta_{01}-\theta_{12}\equiv0\pmod{2\pi},\\
\theta_{02}-\theta_{10}\equiv0\pmod{2\pi},\\
\theta_{02}-\theta_{11}\equiv0\pmod{2\pi}.
\end{eqnarray*}
\end{itemize}
The last four equations imply $\theta_{01}\equiv \theta_{10}\equiv \theta_{12}\equiv \theta_{02}\equiv\theta_{11}\pmod{2\pi}$. Let their common value be $c$. The first equation then gives $\theta_{00}\equiv c+\frac\pi2\pmod{2\pi}$. However, since $\theta_{jk}=\alpha_j+\beta_k$, these numbers necessarily satisfy the additive rectangle identity $\theta_{00}+\theta_{11}-\theta_{01}-\theta_{10}\equiv0\pmod{2\pi}$. Indeed,
\begin{eqnarray*}
&&\theta_{00}+\theta_{11}-\theta_{01}-\theta_{10} \\
&&=\sum^1_{k=0}(\alpha_k+\beta_k) - (\alpha_0+\beta_1)- (\alpha_1+\beta_0)\\
&&=0.
\end{eqnarray*}
But the phase conditions above instead give
\begin{eqnarray*}
&&\theta_{00}+\theta_{11}-\theta_{01}-\theta_{10} \\
&&\equiv (c+\tfrac\pi2)+c-c-c\equiv\tfrac\pi2\pmod{2\pi},
\end{eqnarray*}
which is a contradiction. Therefore, no such unitaries $\bsU_A$ and $\bsU_B$ exist, i.e., $\rho'_{AB}$ is not LU equivalent to $\rho_{AB}$.
\end{exam}
A generic qubit-qutrit density operator is a $6\times6$ Hermitian trace-one matrix, so the real dimension is $6^2-1=35$. The effective local unitary group has the same orbit dimension as $\rS\rU(2)\times \rS\rU(3)$, namely $\dim(\rS\rU(2))+\dim(\rS\rU(3))=3+8=11$. For a generic state, the stabilizer is discrete modulo the central phases, and the generic LU-orbit space has dimension $35-11=24$. This count illustrates the complexity of complete qubit-qutrit classification, but the essential obstruction is structural rather than dimensional:
on the locally maximally mixed sector the ordinary invariants collapse to global spectral moments and lose the correlation information needed to resolve LU orbits inside a fixed global unitary orbit.  Indeed, for the qubit–qutrit system, the polynomial invariant ring has an extremely large finite generating set, and its explicit determination remains an open problem to this day \cite{Gerdt2011}.

\section{Two-qudit systems with local dimension at least three}\label{sect:5}

When both local dimensions are at least three, ordinary
marginal-word invariants may fail even to determine
whether a state is entangled. The underlying mechanism
is again the locally maximally mixed reduction Eq.~\eqref{eq:genericexpression}: any
two locally maximally mixed states with the same global
spectrum agree on every ordinary Bargmann invariant.
To show that these invariants cannot decide entanglement,
it is enough to construct an isospectral locally maximally
mixed pair consisting of one separable state and
one entangled state.

\subsection{A separable-NPT pair in $\bbC^3\ot\bbC^3$}

\begin{exam}\label{exam:3}
Let $\bsF=\sum_{i,j=0}^{2}\out{ij}{ji}$ be the swap operator on $\bbC^3\ot\bbC^3$,   and
define $\rho_{AB}=\frac{ \I_9-\bsF}{6}$, $\sigma_{AB}   =
\frac13\sum_{i=0}^{2}\ket{ii}\bra{ii}$. The operator $\rho_{AB}$  is the normalized projector onto the
antisymmetric subspace. Equivalently, writing $\ket{\psi^-_{ij}}=\frac{\ket{ij}-\ket{ji}}{\sqrt{2}}$ for $i<j$, one has $\rho_{AB}=\frac13\Pa{\proj{\psi^-_{01}}+\proj{\psi^-_{12}} +\proj{\psi^-_{02}}}$. Thus $\rho_{AB}$ is a density operator of rank $3$ with spectrum $\Set{\frac13,\frac13,\frac13,0,0,0,0,0,0}$, and maximally mixed marginals $\rho_A=\rho_B=\frac{\I_3}3$.

Let $\ket{\Phi_3}=\frac{1}{\sqrt{3}}(\ket{00}+\ket{11}+\ket{22})$. The partial transpose
of the swap satisfies $\bsF^\Gamma=3\proj{\Phi_3}$, so $\rho_{AB}^\Gamma=\frac{1}{6}(\I_9-3\proj{\Phi_3})$. In particular, 
$$
\Innerm{\Phi_3}{\rho_{AB}^\Gamma}{\Phi_3} = \tfrac{1-3}{6} =
-\tfrac13<0.
$$
Thus $\rho_{AB}^\Gamma$ is not positive semidefinite, and
$\rho_{AB}$ is NPT entangled.

The state $\sigma_{AB}$ is a mixture of product states, hence separable, with the same spectrum and the same maximally
mixed marginals. The two states are therefore globally unitarily equivalent, and Eq.~\eqref{eq:genericexpression} yields $B_{\mathbf{w}}(\rho_{AB})=B_{\mathbf{w}}(\sigma_{AB})$ for every word $\mathbf{w}$. They are not LU equivalent, because local unitaries preserve separability.

An explicit global unitary implementing the equivalence may be written in the orthonormal bases
\begin{eqnarray*}
\cA&=&\Set{\ket{\psi^-_{01}},\ket{\psi^-_{12}},  \ket{\psi^-_{02}},
\ket{00},   \ket{11},   \ket{22},   \ket{\psi^+_{01}},
\ket{\psi^+_{12}},  \ket{\psi^+_{02}}},\\
\cB&=&\Set{\ket{00}, \ket{11}, \ket{22}, \ket{01}, \ket{10}, \ket{12},
\ket{21}, \ket{20}, \ket{02}},
\end{eqnarray*}
where $\ket{\psi^+_{ij}}=\frac{\ket{ij}+\ket{ji}}{\sqrt{2}}$. The linear map sending the $k$-th vector of $\cA$ to the $k$-th vector of $\cB$ is unitary and conjugates $\rho_{AB}$ to $\sigma_{AB}$.
\end{exam}

Example~\ref{exam:3} already proves that the full family of ordinary
marginal-word invariants cannot decide entanglement
in the two-qutrit system. We next extend the phenomenon
to every $n\ot n$ system with $n\geqslant3$.

\subsection{Extension to every $n\geqslant3$}

The construction uses an $n$-dimensional subspace spanned by orthonormal maximally entangled vectors and containing no nonzero product vector. Let $\omega=e^{\mathrm{i}\frac{2\pi}n}$, and define the generalized shift and phase operators by
\begin{eqnarray*}
\bsX\ket{j}=\ket{j+1},\quad
\bsZ\ket{j}=\omega^j\ket{j},
\end{eqnarray*}
with indices modulo $n$.
\begin{lem}\label{lem:no-rank-one}
For every $n\geqslant 3$, there exist unitary matrices $\bsU_0,\ldots,\bsU_{n-1}\in \rU(n)$ such that 
\begin{itemize}
\item[(i)] they are pairwise Hilbert--Schmidt orthogonal, $\Inner{\bsU_i}{\bsU_j}=n\delta_{ij}$; and
\item[(ii)] every nonzero matrix in their linear span has rank at least two.
\end{itemize}
\end{lem}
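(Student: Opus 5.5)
We will take $\bsU_k:=\bsX^k\bsZ^{k^2}$ for $k=0,1,\ldots,n-1$, built from the shift and phase operators just defined, and verify (i) and (ii) directly. The first natural guess, $\bsX^k\bsZ^k$, does \emph{not} work: the consecutive-column ratio test below can be solved consistently for it, so its span contains rank-one matrices. The quadratic exponent is what creates an obstruction once $n\geqslant3$.

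Property (i) is immediate. Each $\bsU_k$ is a product of unitaries, so it is unitary and $\Inner{\bsU_k}{\bsU_k}=n$. For $k\neq l$, the operator $\bsU_k^\dagger\bsU_l$ equals $\bsX^{l-k}$ times a diagonal matrix, up to a phase. Since $l-k\not\equiv0 \pmod n$, it has zero diagonal, and hence zero trace.

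For (ii), let $\bsM=\sum_k c_k\bsU_k\neq\mathbf{0}$ and suppose $\bsM$ has rank one. Since $\bsU_k\ket{j}=\omega^{k^2 j}\ket{j+k}$, the matrix entries are
\begin{eqnarray*}
M_{ij}=c_{i-j}\,\omega^{(i-j)^2 j}.
\end{eqnarray*}
Let $S=\set{k:c_k\neq0}$, which is nonempty. Column $j$ is then a nonzero vector with support exactly $j+S$. A rank-one matrix has all its nonzero columns proportional, so they share a common support. Hence $j+S=S$ for every $j$, which forces $S=\mathbb{Z}_n$. In particular all $c_k$ and all $M_{ij}$ are nonzero.

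Rank one then means that, for each $j$, the ratio $M_{i,j+1}/M_{i,j}$ is independent of the row $i$. Writing $m=i-j$, this ratio equals
\begin{eqnarray*}
\frac{c_{m-1}}{c_m}\,\omega^{(m-1)^2(j+1)-m^2 j}=\frac{c_{m-1}}{c_m}\,\omega^{(m-1)^2}\,\omega^{j(1-2m)}.
\end{eqnarray*}
Take $j=0$ and $j=1$. Row-independence gives $\frac{c_{m-1}}{c_m}\omega^{(m-1)^2}=\lambda_0$ and $\lambda_0\,\omega^{1-2m}=\lambda_1$ for all $m$. Therefore $\omega^{-2m}$ would be constant in $m$, i.e.\ $\omega^2=1$. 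This is impossible for $n\geqslant3$, since $\omega=e^{2\pi\mathrm{i}/n}$. This contradiction proves (ii).

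The main step to handle carefully is this last one. One must justify reducing rank one to the row-independence of consecutive column ratios, which is valid because every entry of $\bsM$ is nonzero. The role of $n\geqslant3$ enters exactly through $\omega^2\neq1$; for $n=2$ the construction degenerates.
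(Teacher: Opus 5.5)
Your proof is correct and follows essentially the same route as the paper: Weyl-type operators $\bsX^k\bsZ^{a_k}$ with entries $c_{i-j}\omega^{a_{i-j}j}$, a support argument forcing every $c_k\neq0$, and then the rank-one condition (the paper uses consecutive $2\times2$ minors, you use the equivalent consecutive-column ratios) forcing the second difference $a_{k-1}-2a_k+a_{k+1}$ to vanish mod $n$. The only difference is the exponent choice: the paper takes $a_k=\delta_{k,n-1}$ and reaches the contradiction at $k=0$, whereas your $a_k=k^2$ has constant second difference $2\not\equiv0\pmod n$, which is exactly your condition $\omega^2\neq1$.
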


\begin{proof}
Set $\bsU_k=\bsX^k \bsZ^{a_k}$ with $a_{n-1}=1$ and $a_k=0$ for $0\leqslant k\leqslant n-2$. Hilbert--Schmidt orthogonality follows from the Weyl relations. Let $\bsM=\sum_{k=0}^{n-1}c_k \bsU_k$. Its matrix elements satisfy
\begin{eqnarray}\label{eq:Mij}
\Innerm{i}{\bsM}{j}=c_{i-j} \omega^{a_{i-j}j},
\end{eqnarray}
with indices modulo $n$. Suppose, for contradiction, that $\bsM=\out{\bsx}{\bsy}$ is a nonzero rank-one matrix. Choose $k$ with $c_k\neq 0$. Eq.~\eqref{eq:Mij} implies that every entry on the $k$-th cyclic diagonal is nonzero,
\begin{eqnarray*}
m_{j+k,j}=c_k\omega^{a_k j}\neq 0,\quad 0\leqslant j\leqslant n-1.
\end{eqnarray*}
Hence every coordinate of $\bsx$ and of $\bsy$ is nonzero, so every cyclic diagonal is nonzero and $c_i\neq 0$ for all $i$. Since $\bsM$ has rank one, every $2\times 2$ minor vanishes. Using rows $j+k, j+k+1$ and columns $j, j+1$ gives
\begin{eqnarray*}
c_k^2\omega^{a_k(2j+1)}=c_{k-1}c_{k+1}\omega^{a_{k-1}(j+1)+a_{k+1}j}.
\end{eqnarray*}
Replacing $j$ by $j+1$ and dividing the two identities yields the cocycle relation
\begin{eqnarray*}
2a_k\equiv a_{k-1}+a_{k+1}\pmod n.
\end{eqnarray*}
For $k=0$ this becomes $2a_0\equiv a_{n-1}+a_1\pmod n$. By construction the right-hand side equals$1$ and the left-hand side equals$0$, which is impossible for $n\geqslant 3$. Thus $\rank(\bsM)\geqslant 2$.
\end{proof}

Let $\ket{\Phi_n}=\frac1{\sqrt{n}}\sum_{j=0}^{n-1}\ket{jj}$ be the standard maximally entangled vector. For any $\bsM\in M_n(\bbC)$ the vector $(\I\ot\bsM)\ket{\Phi_n}$ has Schmidt rank equal to $\rank(\bsM)$, and is a nonzero product vector if and only if $\bsM$ has rank one. Define
\begin{eqnarray}\label{eq:psik}
\ket{\psi_k}=(\I\ot \bsU_k)\ket{\Phi_n},\quad 0\leqslant k\leqslant n-1,
\end{eqnarray}
with $\bsU_k$ as in Lemma~\ref{lem:no-rank-one}. Each$\ket{\psi_k}$ is maximally entangled, and
\begin{eqnarray*}
\Inner{\psi_i}{\psi_j}=\tfrac1n\Tr{\bsU_i^\dagger \bsU_j}=\delta_{ij},
\end{eqnarray*}
so these vectors form an orthonormal family. Let $\cS=\op{Span}\set{\ket{\psi_k}:k=0,1,\ldots,n-1}$. If a nonzero product vector $\ket{\bsa},\ket{\bsb}$ belonged to $\cS$, then for some coefficients not all zero one would have
\begin{eqnarray*}
\ket{\bsa}\ot\ket{\bsb}
=\Pa{\I\ot\sum_{k=0}^{n-1}c_k \bsU_k}\ket{\Phi_n},
\end{eqnarray*}
forcing $\sum_{k=0}^{n-1}c_k \bsU_k$ to have rank one and contradicting Lemma~\ref{lem:no-rank-one}. Hence $\cS$ contains no nonzero product vector.

We recall a low-rank PPT theorem of Horodecki, Lewenstein, Vidal, and Cirac~\cite{Horodecki2000}.
\begin{thrm}[\cite{Horodecki2000}]\label{thm:hlvc}
Let $\rho_{AB}\in\rD(\bbC^m\ot\bbC^n)$ with $m\leqslant n$. If $\rho_{AB}^{\Gamma}\geqslant \mathbf{0}$ and $\rank(\rho_{AB})=n$, then $\rho_{AB}$ is separable.
\end{thrm}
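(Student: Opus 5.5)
The statement is Theorem~\ref{thm:hlvc}: a PPT state on $\bbC^m\ot\bbC^n$ ($m\leqslant n$) whose rank equals $n$ is separable. The plan is to follow the Horodecki--Lewenstein--Vidal--Cirac strategy. We first reduce to the case where the marginal $\rho_A$ has full rank $m$ and $\rho_B$ has full rank $n$; otherwise we restrict to the supports of the marginals and argue by induction on the local dimensions.

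Once both marginals have full rank, we use the range criterion. Since $\rank\rho_{AB}=n$ and $\rho_B$ has rank $n$, the map $\ket{\psi}\mapsto\ptr{A}{\proj{\psi}}$ interacts tightly with the range. More precisely, we write $\rho_{AB}=\sum_i\out{i}{j}\ot\bsB_{ij}$ with $m\times m$ blocks indexed by $B$. After a local change of basis on $B$ and a suitable normalization of the $A$ side, we put the state in the canonical form
\[
\rho=\big(\I\ot\bsC\big)^\dagger\,\Big(\textstyle\sum_{i}\ket{i}\ot\ket{\chi}\dots\Big)\big(\I\ot\bsC\big).
\]
In practice this means writing $\rho=\bsF^\dagger\bsF$ with $\bsF$ an $n\times mn$ matrix whose blocks $\bsF_k$ ($k=1,\dots,m$) are $n\times n$ matrices. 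Full rank of $\rho_B$ lets us normalize so that $\bsF_1=\I_n$, giving $\rho=\sum_{k,l}\out{k}{l}\ot \bsF_k^\dagger\bsF_l$.

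The key step is to show that PPT forces the $\bsF_k$ to be \emph{normal and mutually commuting}. The partial transpose is $\sum\out{l}{k}\ot\bsF_k^\dagger\bsF_l$. Its range has rank at least $n$, since the $(1,1)$-type block structure forces this. A trace or rank comparison combined with positivity then yields $\bsF_k^\dagger\bsF_l=\bsF_l\bsF_k^\dagger$. We would get this by showing that $\rho^\Gamma$ must also have rank exactly $n$ and admit a factorization $\bsG^\dagger\bsG$ with $\bsG=(\I,\bsF_2^\dagger,\dots)$-type blocks. Comparing the two factorizations gives commutation relations among the $\bsF_k$ and $\bsF_k^\dagger$.

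Commuting normal matrices are simultaneously diagonalizable by one unitary $\bsV$, say $\bsF_k=\bsV\diag(f_k^{(1)},\dots,f_k^{(n)})\bsV^\dagger$. Then $\rho=\sum_{r=1}^n\proj{a_r}\ot\proj{b_r}$ with $\ket{a_r}=\sum_k \overline{f_k^{(r)}}\ket{k}$ and $\ket{b_r}$ the columns of $\bsV$, which is manifestly separable.

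The main obstacle is the middle step: deriving $\rank\rho^\Gamma=n$ and the commutation $[\bsF_k,\bsF_l^\dagger]=0$ from PPT alone. I would first try the known HLVC lemma that $\rank\rho^\Gamma\geqslant\rank\rho_B$ when $\rho^\Gamma\geqslant0$. I would combine it with $\Tr{(\rho^\Gamma)^2}=\Tr{\rho^2}$, i.e.\ Hilbert--Schmidt norm preservation, to pin the rank. Then I would compare blocks of $\rho^\Gamma$ in the canonical form to extract normality of each $\bsF_k$ and their pairwise commutation.
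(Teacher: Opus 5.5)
The paper does not prove Theorem~\ref{thm:hlvc}; it only cites Horodecki--Lewenstein--Vidal--Cirac. Your skeleton is the right one: canonical form $\rho=\sum_{k,l}\out{k}{l}\ot\bsF_k^\dagger\bsF_l$, normalization $\bsF_1=\I_n$, commuting normal blocks, then simultaneous diagonalization. The opening reduction, however, cannot work.

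Restricting $A$ to the support of $\rho_A$ is harmless. Restricting $B$ to a support of dimension $n'<n$ leaves a state of rank $n>n'$, to which no induction hypothesis applies. In that regime the statement is simply false. The bound entangled state built from the Tiles unextendible product basis on $\bbC^3\ot\bbC^3$ is PPT, entangled and of rank $4$. Embedded in $\bbC^4\ot\bbC^4$, it meets every hypothesis as printed with $m=n=4$. HLVC assume $\rho$ is supported on $\bbC^m\ot\bbC^n$, so $\rank\rho_B=n$ has to be added as a hypothesis, not reduced to. It does hold in the paper's only application, Theorem~\ref{thm:nn}, where the marginals are maximally mixed.

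Within the full-support case, two further steps are missing.

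First, the normalization. $\rank\rho_B=n$ only gives $\bigcap_k\ker\bsF_k=\{0\}$, not an invertible element of $\op{Span}\{\bsF_k\}$, so it does not let you set $\bsF_1=\I_n$. For example, $\bsF_1=\out{0}{0}+\out{1}{2}$ and $\bsF_2=\out{0}{1}+\out{2}{2}$ on $\bbC^3$ give a rank-$3$ state on $\bbC^2\ot\bbC^3$ with full-rank marginals for which every $\bra{e}\rho\ket{e}$ is singular. PPT must be used here:
\begin{itemize}
\item Rotate the $A$-basis (PPT is basis independent) so that $\bsF_1$ has maximal rank among all $\sum_k e_k\bsF_k$.
\item For $f\in\ker\bsF_1$, $\langle 1,f|\rho^\Gamma|1,f\rangle=\norm{\bsF_1f}^2=0$ forces $\rho^\Gamma\ket{1,f}=0$, i.e.\ $\bsF_1^\dagger\bsF_bf=0$ for every $b$.
\item Hence $\bsF_1+\epsilon\bsF_b$ is block lower triangular with respect to $(\ker\bsF_1)^\perp\oplus\ker\bsF_1\to\op{ran}\bsF_1\oplus(\op{ran}\bsF_1)^\perp$. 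Its rank is $\rank\bsF_1+\rank(\text{lower-right block of }\bsF_b)$ for small $\epsilon\neq0$.
\item Maximality kills that block, so $\bsF_bf=0$ for all $b$, and $\ker\bsF_1\subseteq\ker\rho_B=\{0\}$.
\end{itemize}
The normalization is then the filter $\I\ot\bsF_1^{-1}$, which preserves PPT and separability.

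Second, your mechanism for the key step does not work. $\rank\rho^\Gamma\geqslant n$ is only a lower bound, and equality of traces and Hilbert--Schmidt norms gives no upper bound on $\rank\rho^\Gamma$. The missing idea is a trace argument on the Schur complement of the block $\I_n$ in $\rho^\Gamma$:
$\bsS=[\bsF_b^\dagger\bsF_a-\bsF_a\bsF_b^\dagger]_{a,b\geqslant2}\geqslant\mathbf{0}$.
Its trace is $\sum_a\Tr{\bsF_a^\dagger\bsF_a-\bsF_a\bsF_a^\dagger}=0$, hence $\bsS=\mathbf{0}$. This gives $\rank\rho^\Gamma=n$, normality of each $\bsF_a$, and $[\bsF_a,\bsF_b^\dagger]=\mathbf{0}$ all at once. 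Fuglede's theorem then yields $[\bsF_a,\bsF_b]=\mathbf{0}$, and your final diagonalization step is correct.
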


\begin{thrm}\label{thm:nn}
For every $n\geqslant3$, the states
\begin{eqnarray}\label{eq:nn}
\rho_{AB}=\frac1n\sum_{k=0}^{n-1}\proj{\psi_k},\quad
\sigma_{AB}=\frac1n\sum_{k=0}^{n-1}\proj{kk}
\end{eqnarray}
have the same global spectrum and maximally mixed marginals, where $\ket{\psi_k}$ is taken from Eq.~\eqref{eq:psik}. The state $\sigma_{AB}$ is separable, whereas $\rho_{AB}$ is NPT entangled. Consequently $B_{\mathbf{w}}(\rho_{AB})=B_{\mathbf{w}}(\sigma_{AB})$ for every ordinary marginal word $\mathbf{w}$, although the states are not LU equivalent.
\end{thrm}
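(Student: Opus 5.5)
The plan is to verify four claims in turn: the spectra, the marginals, separability of $\sigma_{AB}$, and NPT entanglement of $\rho_{AB}$. The equality of all ordinary Bargmann invariants then follows from Eq.~\eqref{eq:genericexpression}, and LU-inequivalence follows from the fact that local unitaries preserve separability.

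\emph{Spectra.} By the orthonormality established after Eq.~\eqref{eq:psik}, $\rho_{AB}$ is $\frac1n$ times a rank-$n$ projector. Hence its spectrum is $\frac1n$ with multiplicity $n$, followed by $n^2-n$ zeros. The same holds for $\sigma_{AB}$, since the vectors $\ket{kk}$ are orthonormal.

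\emph{Marginals.} For $\sigma_{AB}$ the marginals are $\frac1n\sum_k\proj{k}=\I/n$ on each side. For $\rho_{AB}$, each $\ket{\psi_k}=(\I\ot\bsU_k)\ket{\Phi_n}$ is maximally entangled. Its $A$-marginal is $\I/n$, because $\bsU_k$ acts only on $B$. Its $B$-marginal is $\bsU_k(\I/n)\bsU_k^\dagger=\I/n$. Averaging over $k$ preserves both facts.

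\emph{Separability of $\sigma_{AB}$.} This is immediate, since $\sigma_{AB}$ is a convex mixture of product states.

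\emph{$\rho_{AB}$ is NPT.} I argue by contradiction using Theorem~\ref{thm:hlvc} with $m=n$. Suppose $\rho_{AB}^\Gamma\geqslant\mathbf{0}$. Since $\rank(\rho_{AB})=n$, Theorem~\ref{thm:hlvc} would make $\rho_{AB}$ separable, so $\rho_{AB}=\sum_i p_i\proj{\bsa_i}\ot\proj{\bsb_i}$ with $p_i>0$. Every vector in a pure-state decomposition of a density operator lies in its range. The range of $\rho_{AB}$ is exactly $\cS$, so each $\ket{\bsa_i}\ot\ket{\bsb_i}$ would be a nonzero product vector in $\cS$. This contradicts the consequence of Lemma~\ref{lem:no-rank-one} established before the theorem. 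Hence $\rho_{AB}^\Gamma$ has a negative eigenvalue, and $\rho_{AB}$ is entangled, indeed NPT.

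\emph{Conclusion.} Both states lie in the locally maximally mixed sector and have the same global spectrum. Eq.~\eqref{eq:genericexpression} therefore gives $B_{\mathbf{w}}(\rho_{AB})=B_{\mathbf{w}}(\sigma_{AB})$ for all words $\mathbf{w}$. They are not LU equivalent, because one is separable and the other entangled.

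The only delicate step is the NPT argument. It relies on the range-criterion fact that product vectors in a separable decomposition lie in $\mathrm{range}(\rho)$, and on applying Theorem~\ref{thm:hlvc} in the boundary case $\rank=n$ with $m=n$. Everything else is bookkeeping.
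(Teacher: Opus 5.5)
Your proposal is correct and follows essentially the same route as the paper. It gets isospectrality from orthonormality, maximally mixed marginals from each $\ket{\psi_k}$ being maximally entangled, and equality of all $B_{\mathbf{w}}$ from Eq.~\eqref{eq:genericexpression}; NPT entanglement then comes from the absence of product vectors in $\cS=\mathrm{range}(\rho_{AB})$ combined with Theorem~\ref{thm:hlvc} at rank $n$, with the only cosmetic difference being that you fold the range-criterion and HLVC steps into a single contradiction, whereas the paper first deduces entanglement and then upgrades it to NPT.
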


\begin{proof}
The vectors $\ket{\psi_k}$ are orthonormal, as are the product vectors $\ket{kk}$, so both states have spectrum consisting of $n$ eigenvalues $\frac1n$ and $n^2-n$ zeros. The state $\sigma_{AB}$ is a mixture of product projectors and is therefore separable; its marginals equal $\frac{\I_n}n$. Each $\ket{\psi_k}$ is maximally entangled, so averaging over $k$ likewise yields $\rho_A=\rho_B=\frac{\I_n}n$. Eq.~\eqref{eq:genericexpression} therefore implies that all ordinary marginal-word invariants agree.

The range of $\rho_{AB}$ is exactly $\cS$, which contains no nonzero product vector. If $\rho_{AB}$ were separable, every product vector appearing with nonzero weight in a pure-state decomposition would lie in $\cS$, which is impossible. Thus $\rho_{AB}$ is entangled. Its rank is $n$. Theorem~\ref{thm:hlvc} with $m=n$ implies that a PPT state of rank $n$ on $\bbC^n\ot\bbC^n$ is separable. An entangled state of rank $n$ therefore cannot be PPT, and $\rho_{AB}$ is NPT. Local unitaries preserve separability, so the two states are not LU equivalent.
\end{proof}

The use of Theorem~\ref{thm:hlvc} is needed only to upgrade entanglement to NPT entanglement. Absence of product vectors in the image of $\rho_{AB}$ already proves that $\rho_{AB}$ is entangled.

\begin{exam}\label{ex:n3-num}
For $n=3$, take $\omega=e^{\mathrm{i}\frac{2\pi}3}$ and $\bsU_0=\I_3, \bsU_1=\bsX, \bsU_2=\bsX^2 \bsZ$. The corresponding maximally entangled vectors are
\begin{eqnarray*}
\ket{\psi_0}&=&\tfrac{1}{\sqrt{3}}\Pa{\ket{00}+\ket{11}+\ket{22}},\\
\ket{\psi_1}&=&\tfrac{1}{\sqrt{3}}\Pa{\ket{01}+\ket{12}+\ket{20}},\\
\ket{\psi_2}&=&\tfrac{1}{\sqrt{3}}\Pa{\ket{02}+\omega\ket{10} + \omega^2\ket{21}}.
\end{eqnarray*}
The mixture $\rho_{AB}=\frac13\sum_{k=0}^2\proj{\psi_k}$ has maximally mixed marginals. Numerical diagonalization of $\rho^\Gamma_{AB}$ yields the spectrum
\begin{equation*}
\begin{split}
&\bigl\{0.281343,\;0.281343,\;0.281343,\\
&\quad 0.149700,\;0.149700,\;0.149700,\\
&\quad -0.097710,\;-0.097710,\;-0.097710\bigr\},
\end{split}
\end{equation*}
confirming three negative eigenvalues and hence NPT entanglement, in agreement with Theorem~\ref{thm:nn}.
\end{exam}
Note that in the case $m=n\geqslant 3$, more general and abstract constructions are available; see \cite{Ma2026}.

\section{Supplement to Theorem 8: Transcendence-degree argument on the locally maximally mixed slice}\label{sect:6}

Let $\cR_{m,n}$ denote the polynomial ring of trace-one Hermitian operators on $\bbC^m\ot\bbC^n$ that are invariant under $\rU(m)\times \rU(n)$, i.e.,
$$
\cR_{m,n}=\bbC[\Herm_1(\bbC^m\ot\bbC^n)]^{\U(m)\times\U(n)}
$$
and let $\cB_{m,n}$ be the subalgebra generated
by all ordinary marginal-word invariants, $\cB_{m,n}=\bbC[B_{\mathbf{w}}:\mathbf{w}\in\set{0,1,2}^*]$.

Recall that for a finitely generated domain $\cA$ over $\bbC$ 
with field of fractions $\op{Frac}(\cA)$, the \emph{transcendence degree}  (formal definition is put in Appendix~\ref{app:C})
$\op{trdeg}_\bbC(\cA)$ is the maximal number of algebraically 
independent elements in $\op{Frac}(\cA)$ over $\bbC$. 
Geometrically, for an invariant ring $\bbC[V]^G$ ($V$ a vector space) under a reductive group action, 
$\op{trdeg}_{\bbC}(\bbC[V]^G)$ coincides with the dimension of the 
generic orbit space. If $\cA_1 \subset \cA_2$ are two subalgebras 
and $\op{trdeg}_{\bbC}(\cA_1) < \op{trdeg}_{\bbC}(\cA_2)$, 
then $\cA_1$ is necessarily a proper subalgebra of $\cA_2$.

\begin{thrm}\label{th:8}
For integers $m,n\geqslant2$, 
\begin{eqnarray}\label{eq:properinclusion}
\cB_{m,n}=\cR_{m,n} \text{ if and only if } (m,n)=(2,2),
\end{eqnarray}
and $\cB_{m,n}\subsetneq\cR_{m,n}$ strictly whenever $\max(m,n)\geqslant3$.
\end{thrm}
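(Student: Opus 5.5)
Both directions of Eq.~\eqref{eq:properinclusion} are handled separately: the case $(2,2)$ is taken from the earlier sections, and the case $\max(m,n)\geqslant3$ is settled by a transcendence-degree comparison on the locally maximally mixed slice.

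\emph{The case $(m,n)=(2,2)$.} Theorem~\ref{th:3} says that $\cR_{2,2}$ is generated by $\set{B_{\mathbf{w}}:\mathbf{w}\in\widetilde\cW}$. Each of these generators lies in $\cB_{2,2}$, so $\cR_{2,2}\subseteq\cB_{2,2}$. Every $B_{\mathbf{w}}$ is LU invariant, so $\cB_{2,2}\subseteq\cR_{2,2}$. Hence the two rings are equal.

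\emph{The case $\max(m,n)\geqslant3$.} Here I plan a dimension count rather than case-by-case counterexamples. (For $(2,3)$ and $(n,n)$ with $n\geqslant3$, Theorem~\ref{th:2to3-strict} and Theorem~\ref{thm:nn} already give strictness, since they exhibit pairs that agree on all of $\cB$ but lie in different LU orbits. Because $\cR_{m,n}$ separates the orbits of the compact group, some element of $\cR_{m,n}$ separates each such pair, and it cannot lie in $\cB_{m,n}$.) For general $(m,n)$, I consider the affine slice
\[
\mathcal{L}=\Set{\bsX\in\Herm_1 : \ptr{A}{\bsX}=\I_n/n,\ \ptr{B}{\bsX}=\I_m/m},
\]
which is $\I/(mn)$ plus the traceless correlation operators. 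Its real dimension is $(m^2-1)(n^2-1)$. By Eq.~\eqref{eq:genericexpression}, every restriction $B_{\mathbf{w}}|_{\mathcal{L}}$ is a polynomial in the power sums $\Tr{\bsX^k}$ with $k\leqslant mn$. Therefore
\[
\op{trdeg}\big(\cB_{m,n}|_{\mathcal{L}}\big)\leqslant mn-1 .
\]
The restriction $\cR_{m,n}|_{\mathcal{L}}$ contains the invariants of the correlation tensor $\bsC\in\bbR^{m^2-1}\ot\bbR^{n^2-1}$ under the adjoint action of $\rS\rU(m)\times\rS\rU(n)$. The group has dimension $m^2+n^2-2$, so generic orbits in $\mathcal{L}$ have dimension at most $m^2+n^2-2$. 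Consequently the LU invariants restricted to $\mathcal{L}$ have
\[
\op{trdeg}\geqslant (m^2-1)(n^2-1)-(m^2-1)-(n^2-1).
\]
The precise version of this count is deferred to Appendix~\ref{app:C}. It uses the fact that for a compact group acting on a real vector space, the real invariants separate orbits, so their transcendence degree equals the generic orbit codimension.

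It remains to compare the two bounds. With $a=m^2-1$ and $b=n^2-1$, the required strict inequality is
\[
ab-a-b>mn-1 \quad\text{whenever } \max(m,n)\geqslant3 .
\]
For $(2,3)$ this reads $24-3-8=13>5$. For larger dimensions the left side grows like $m^2n^2$ while the right side grows like $mn$. Once the inequality holds, $\cB_{m,n}|_{\mathcal{L}}\subsetneq\cR_{m,n}|_{\mathcal{L}}$, and restriction to $\mathcal{L}$ is a ring homomorphism. Hence $\cB_{m,n}\neq\cR_{m,n}$. As a consistency check, for $(2,2)$ one gets $9-6=3=mn-1$, so there is no contradiction with the first part.

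The step I expect to be the main obstacle is the lower bound on the transcendence degree of $\cR_{m,n}|_{\mathcal{L}}$. The difficulty is justifying rigorously that the generic orbit codimension on the slice bounds the transcendence degree of the restricted invariant ring, and doing so for the real compact group rather than its complexification. My first attempt would complexify $\mathcal{L}$ and use the reductive group $\rS\rL(m)\times\rS\rL(n)$ acting on $\mathfrak{sl}_m\ot\mathfrak{sl}_n$. By Rosenlicht's theorem together with the fact that invariants of a reductive group separate closed orbits, the transcendence degree is at least $\dim V-\dim G$. A fallback is to produce explicitly $mn$ algebraically independent invariants on $\mathcal{L}$, for instance the traces $\Tr{(\bsC\bsC^\t)^k}$ combined with global spectral moments. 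A Jacobian-rank computation at one explicit point would certify their independence.
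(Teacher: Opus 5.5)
Your proposal is correct and follows the paper's proof essentially step for step: Theorem~\ref{th:3} handles $(2,2)$, and every other case is settled by comparing $\op{trdeg}(\cB_{m,n}|_{V_{m,n}})\leqslant mn-1$ with $\op{trdeg}(\cR_{m,n}|_{V_{m,n}})\geqslant (m^2-1)(n^2-1)-(m^2+n^2-2)$ on the locally maximally mixed slice, where your pulled-back correlation-tensor invariants play the role of the paper's equivariant projection $\pi$. The one place to tighten is the final inequality, which the paper proves exactly rather than by a growth heuristic: the gap equals $Q_{m,n}=(m^2-2)(n^2-2)-mn$, so $Q_{2,n}=2(n-2)(n+1)>0$ for $n\geqslant3$ (symmetrically in $m$), and $Q_{m,n}>0$ for $m,n\geqslant3$ because $m^2-2>m$ and $n^2-2>n$.
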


\begin{proof}
The sufficiency for $(m,n)=(2,2)$ is established by Theorem~\ref{th:3}. For any 
other pair $(m,n)\neq(2,2)$ with $m,n\geqslant2$, it suffices to demonstrate that the field of fractions of $\cB_{m,n}$ has a strictly smaller transcendence degree than that of $\cR_{m,n}$.

We restrict both invariant algebras to the affine slice of locally maximally mixed states, $V_{m,n}\equiv\sC(\tfrac{\I_m}m,\tfrac{\I_n}n)$, that is,
$$
V_{m,n}:=\Set{\rho\in\rD(\bbC^m\ot\bbC^n): \ptr{B}{\rho}=\tfrac{\I_m}m,\ptr{A}{\rho}=\tfrac{\I_n}n}.
$$
where $D:=mn$. The underlying translation space is
$\widetilde V_{m,n}:=V_{m,n}-\tfrac{\I_D}D=\Herm_0(\bbC^m)\ot_{\bbR}\Herm_0(\bbC^n)$, having real dimension
$d=\dim_{\bbR}\widetilde V_{m,n}=(m^2-1)(n^2-1)$. On this slice, the lifted marginals collapse to scalar multiples of the identity: $\bsX_1=\frac1m\I_D$ and $\bsX_2=\frac1n\I_D$ with $D=mn$. Consequently, every ordinary marginal-word Bargmann invariant reduces to a scalar multiple of a global spectral moment,
$$
B_{\mathbf{w}}(\rho) = m^{-N_1(w)} n^{-N_2(w)} \Tr{\rho^{N_0(\mathbf{w})}}.
$$
Since $\Tr{\rho}=1$, the restricted algebra $\B_{m,n}|_{V_{m,n}}$ is generated solely by the $D-1$ global trace powers $\set{\Tr{\rho^k}}^D_{k=2}$, whence 
$$
\op{trdeg}\Pa{\cB_{m,n}|_{V_{m,n}}}\leqslant D-1=mn-1.
$$
Conversely, the effective local unitary group acting on $V_{m,n}$ is $G=\op{PU}(m)\times\op{PU}(n)$, whose Lie algebra dimension is 
$g=(m^2-1)+(n^2-1)=m^2+n^2-2$. Because each local-unitary orbit has dimension at most $g$, the generic LU orbit space on $\widetilde V_{m,n}$ has dimension at least $d-g$.  Since the orthogonal projection onto $\widetilde V_{m,n}$ is $G$-equivariant, any  local-unitary invariant on the slice extends to a polynomial invariant in $\cR_{m,n}$ (see Appendix~\ref{app:C}). The dimensional difference satisfies
\begin{eqnarray*}
Q_{m,n}&:=&(d-g)-(mn-1)\\
&=&(m^2-2)(n^2-2)-mn.
\end{eqnarray*}
Direct inspection shows $Q_{2,2}=0$, whereas $Q_{m,n}>0$ for all $(m,n)\neq(2,2)$ (e.g., $Q_{2,3}=8$ and $Q_{3,3}=40$). It follows that
\begin{eqnarray*}
\op{trdeg}\Pa{\cR_{m,n}|_{V_{m,n}}}>\op{trdeg}\Pa{\cB_{m,n}|_{V_{m,n}}},
\end{eqnarray*}
which guarantees that $\cB_{m,n}$ is a proper subalgebra of $\cR_{m,n}$.  This completes the proof.
\end{proof}

\section{Conclusions}\label{sect:7}

Ordinary marginal-word Bargmann invariants associated with the canonical triple $(\bsX_0,\bsX_1,\bsX_2)$ are natural polynomial LU invariants, but they do not provide a dimension-independent description of bipartite states. Their exact scope is as follows.

\begin{enumerate}
\item For every qubit--qudit state they determine the complete partial-transpose spectrum and therefore the PPT property.
\item In $2\ot 2$ and $2\ot 3$ systems, they decide separability, because PPT is equivalent to separability in those dimensions.
\item In $2\ot 2$ system, a finite subfamily separates LU orbits, and a finite extension generates the polynomial invariant ring.
\item In $2\ot 3$ system, all ordinary invariants can agree on distinct LU orbits, and a quartic correlation invariant lies outside their algebra. The same phenomenon occurs for certain states with nondegenerate, non-maximally mixed marginals.
\item For every $n\geqslant 3$, all ordinary invariants can agree on a separable state and an NPT-entangled state with the same global spectrum and maximally mixed marginals.
\item The algebra generated by the ordinary invariants coincides with the full polynomial LU-invariant ring if and only if both subsystems are qubits.
\end{enumerate}

The decisive loss occurs on the locally maximally mixed sector. There the marginal operators carry no nontrivial eigenspace information, and the trace words collapse to global spectral moments. The missing data are not additional moments of the same operators, but correlation invariants that retain the placement of global eigenspaces relative to the tensor-product decomposition. Operator-Schmidt coefficients, correlation-matrix contractions, and higher-order tensor invariants are natural candidates.

Several problems remain open. A finite separating family for generic qubit--qutrit mixed states is not identified here; the generic orbit space has real dimension $24$, but a complete separating set must also resolve nongeneric strata. It would be useful to characterize the subsets of state space on which ordinary marginal-word invariants remain complete, for instance under nondegeneracy assumptions on the marginal spectra. A further problem is to determine which additional invariants admit efficient estimation through multivariate trace protocols or collective measurements~\cite{Oszmaniec2024,Quek2024}. The main structural lesson is that increasing the word length within the same canonical triple cannot repair an information loss caused by the collapse of the marginals to scalar operators.

\begin{acknowledgments}
The authors designed the study, carried out the mathematical analysis, and prepared the manuscript. Generative-AI tools were used for language refinement. All mathematical statements were independently reviewed and verified by the authors, who take full responsibility for the content.
\end{acknowledgments}

\section*{Data availability}
No data were created or analyzed in this study.

\appendix

\section{Positivity from elementary symmetric polynomials}\label{app:A}

This is a known result, recorded here for convenience.

\begin{prop}\label{prop:1}
For a $d\times d$ Hermitian matrix $\bsH$, it holds that
\begin{eqnarray}
\bsH\geqslant \mathbf{0}\Longleftrightarrow e_k(\bsH)\geqslant0\text{ for all
}k=1,\ldots,d.
\end{eqnarray}
\end{prop}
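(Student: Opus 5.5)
The plan is to use the spectral theorem together with the factorization of the characteristic polynomial. Since $\bsH$ is Hermitian, it has real eigenvalues $\lambda_1,\ldots,\lambda_d$, and
$$
f_{\bsH}(\lambda)=\det(\lambda\I_d-\bsH)=\prod_{i=1}^d(\lambda-\lambda_i)=\sum_{k=0}^d(-1)^k e_k(\bsH)\lambda^{d-k}.
$$
Positive semidefiniteness of $\bsH$ is equivalent to $\lambda_i\geqslant0$ for all $i$. So the statement reduces to a fact about real-rooted polynomials: all roots are nonnegative if and only if all $e_k\geqslant0$.

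For the forward direction, suppose every $\lambda_i\geqslant0$. Each $e_k(\bsH)=\sum_{i_1<\cdots<i_k}\lambda_{i_1}\cdots\lambda_{i_k}$ is a sum of products of nonnegative numbers, so $e_k(\bsH)\geqslant0$.

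For the converse, assume $e_k(\bsH)\geqslant0$ for $k=1,\ldots,d$. Evaluate the characteristic polynomial at a negative argument: for $t>0$,
$$
(-1)^d f_{\bsH}(-t)=\prod_{i=1}^d(t+\lambda_i)=\sum_{k=0}^d e_k(\bsH)\,t^{d-k}.
$$
This identity follows by substituting $\lambda=-t$ and collecting the signs $(-1)^k(-1)^{d-k}=(-1)^d$. The right-hand side is at least $t^d>0$, since $e_0=1$ and all other terms are nonnegative. Hence $f_{\bsH}(-t)\neq0$ for every $t>0$, so no eigenvalue is negative. Because all eigenvalues are real, they are all nonnegative, and $\bsH\geqslant\mathbf{0}$.

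The only delicate point is the use of reality of the roots. For a non-Hermitian matrix, nonnegative $e_k$ would exclude negative real eigenvalues but not complex ones. That is exactly why the Hermitian hypothesis is needed, and it enters only through the spectral theorem.
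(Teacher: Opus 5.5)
Your proof is correct and follows essentially the same route as the paper: the forward direction uses that each $e_k(\bsH)$ is a sum of products of nonnegative eigenvalues, and the converse shows $\det(t\I_d+\bsH)=\sum_{k} e_k(\bsH)\,t^{d-k}>0$ for $t>0$, so no eigenvalue can be negative. You also state explicitly that the reality of the eigenvalues, which comes from Hermiticity, is what makes excluding negative roots sufficient; the paper uses this only implicitly.
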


\begin{proof}
If all eigenvalues are nonnegative, every elementary symmetric
polynomial is a sum of products of nonnegative numbers, so
$e_k(\bsH)\geqslant0$. Conversely, suppose that every
$e_k(\bsH)\geqslant0$. Consider
$g(t)=\det(t\I_d+\bsH)=\prod^d_{j=1}(t+\lambda_j)$, which is equal
to
$$
g(t)=t^d +e_1(\bsH)t^{d-1}+\cdots+e_d(\bsH).
$$
For every $t>0$, all terms on the right are nonnegative and the
leading term is strictly positive. Hence $g(t)>0$ for $t>0$. If some
eigenvalue $\lambda_j$ were negative, then $t=-\lambda_j>0$ would
satisfy $g(-\lambda_j)=0$, which is impossible. Therefore all
eigenvalues are nonnegative.
\end{proof}
The power sums $p_k(\bsH)=\Tr{\bsH^k}$ and the elementary
symmetric polynomials are related by Newton’s identities $ke_k(\bsY)=\sum^k_{j=1}(-1)^{j-1}e_{k-j}(\bsY)p_j(\bsY)$ or equivalently by the determinantal formula \cite{Macdonald1995}:
\begin{eqnarray*}
e_k(\bsH)=\frac1{k!}\Abs{\begin{array}{ccccc}
                           p_1(\bsH) & 1 & 0 & \cdots & 0 \\
                           p_2(\bsH) & p_1(\bsH) & 2 & \cdots & 0 \\
                           \vdots & \vdots & \vdots & \ddots & \vdots \\
                           p_{k-1}(\bsH) & p_{k-2}(\bsH) & p_{k-3}(\bsH) & \cdots & k-1 \\
                           p_k(\bsH) & p_{k-1}(\bsH) & p_{k-2}(\bsH) & \cdots & p_1(\bsH)
                         \end{array}
}.
\end{eqnarray*}
This is the only positivity input used in Sections~\ref{sect:2} and \ref{sect:4}.

\section{The quartic correlation invariant}\label{app:B}

Let $\Herm_0(\bbC^d)$ be the real vector space of traceless Hermitian $d\times d$ operators, and
\begin{eqnarray*}
&&\set{\bsS_i:i=1,2,3}\subset \Herm_0(\bbC^2),\\
&&\set{\bsT_j:j=1,\ldots,8}\subset\Herm_0(\bbC^3),
\end{eqnarray*}
be Hilbert–Schmidt orthonormal bases of the traceless Hermitian operator spaces, and let $\bsC\equiv\bsC(\rho)$ be the correlation
matrix whose entries are given by $\Tr{\rho(\bsS_i\ot\bsT_j)}$. Under local conjugation the adjoint representations act orthogonally, so
$$
\bsC\mapsto\bsO_A\bsC\bsO^\t_B, \bsC\bsC^\t\mapsto\bsO_A(\bsC\bsC^\t)\bsO^\t_A.
$$
The eigenvalues of $\bsC\bsC^\t$ are therefore polynomial LU invariants, as are all of their elementary symmetric polynomials.
The quantity $J_2$ in Eq.~\eqref{eq:J2} is the second of these.

For $\bsH=\sigma_z\ot(\proj{0}-\proj{2})$, one has $\norm{\sigma}_2=\sqrt{2}$ and $\norm{\diag(1,0,-1)}_2=\sqrt{2}$. The correlation matrix of 
$\rho(\bsH)$ therefore has a single nonzero singular value,
\begin{eqnarray*}
s_1(C(\rho(\bsH)))=\tfrac{\varepsilon}3,\quad s_2=s_3=0.
\end{eqnarray*}
Hence $J_2(\rho(\bsH))=\sum_{1\leqslant i<j\leqslant 3}s_i^2 s_j^2=0$.

For $\bsK$,  the operator-Schmidt decomposition into two independent Hermitian factors produces two equal nonzero singular values
\begin{equation}
s_1(C(\rho(\bsK)))=s_2(C(\rho(\bsK)))=\tfrac{\varepsilon}{3\sqrt{2}},\quad
s_3=0.
\end{equation}
Therefore
\begin{equation}
J_2(\rho(\bsK))=s_1^2 s_2^2=\tfrac{\varepsilon^4}{324}>0.
\end{equation}
Since $B_{\mathbf{w}}(\rho(\bsH))=B_{\mathbf{w}}(\rho(\bsK))$ for every word $\mathbf{w}$, no polynomial in the ordinary Bargmann invariants can reproduce $J_2$. This proves Theorem~\ref{th:2to3-strict} directly.

The obstruction can be understood without reference to this particular quartic. When both marginals are maximally mixed, all ordinary Bargmann invariants collapse to spectral moments and retain only the global spectrum. The full LU-invariant ring, by contrast, contains invariants sensitive to the bipartite correlation structure: operator-Schmidt coefficients, polynomial functions of the correlation matrix, coefficients of the characteristic polynomial of $\bsC\bsC^\t$, and higher-order contractions of local correlation tensors. For the states of Example~\ref{exam:2ot3}, the traceless correlation operators have different operator-Schmidt ranks, $1$ for $\bsH$ and $2$ for $\bsK$. The invariant $J_2$ is a polynomial detector of precisely this difference: $J_2(\rho)=0$ whenever the correlation matrix has rank at most one, while $J_2(\rho)>0$ for the rank-two correlation matrix of $\rho(\bsK)$.

\section{Transcendence-degree analysis on the locally maximally mixed slice}\label{app:C}

In this Appendix, we provide the complete algebraic and geometric justification for Theorem~\ref{th:8}. The proof relies on comparing the transcendence degrees of the invariant algebras restricted to the locally maximally mixed slice, establishing that ordinary Bargmann invariants fail to separate local-unitary orbits for any bipartite system beyond two-qubit states.

Before proceeding, let recall the formal definition \cite{Mumford1994}.
\begin{definition}[Transcendence degree]
Let $\mathbbm{k}$ be a field and $\cA$ an integral domain containing $\mathbbm{k}$, with field of fractions $\cK = \op{Frac}(\cA)$. A finite subset $\Set{f_1, \ldots, f_r} \subset \cK$ is said to be \emph{algebraically independent} over $\mathbbm{k}$ if there is no non-zero polynomial $P \in \mathbbm{k}[x_1, \dots, x_r]$ such that $P(f_1, \ldots, f_r) = 0$. The \emph{transcendence degree} of $\cA$ (or of $\cK$) over  $\mathbbm{k}$, denoted by $\op{trdeg}_{\mathbbm{k}}(\cA)$ (or $\op{trdeg}_{\mathbbm{k}} \cK$), is the maximal cardinality of an algebraically independent subset of $\cK$ over $\mathbbm{k}$. Geometrically, if $\cA = \bbC[V]^G$ is the ring of polynomial invariants of an affine variety $V$ under an algebraic group $G$, $\op{trdeg}_{\bbC}(\cA)$ equals the geometric dimension of the categorical quotient variety $V  /\!\!/ G$ (i.e., the dimension of the generic orbit space).
\end{definition}

\subsection{The locally maximally mixed slice and equivariant extension}

Let $\cH := \bbC^m \ot\bbC^n$ with total dimension $D = mn$. The real vector space $\Herm_1(\cH)$ of Hermitian operators of trace one admits the canonical orthogonal decomposition under the Hilbert–Schmidt inner product:
\begin{eqnarray}\label{eq:decomp}
\Herm_1(\cH) &=& \tfrac{\I_D}D \oplus \Br{\Herm_0(\bbC^m)\ot\I_n}\notag\\
&&\oplus\Br{\I_m\ot\Herm_0(\bbC^n)}\oplus \widetilde V_{m,n},
\end{eqnarray}
where
\begin{eqnarray*}
\widetilde{V}_{m,n} = \Herm_0(\bbC^m) \ot_{\bbR} \Herm_0(\bbC^n).
\end{eqnarray*}
Here, $\Herm_0(\bbC^k)$ denotes the $(k^2-1)$-dimensional space of traceless Hermitian $k\times k$ matrices. The affine subspace of density operators with maximally mixed marginals is defined by
\begin{eqnarray*}
V_{m,n} &=& \Set{\rho\in\rD(\bbC^m\ot\bbC^n): \ptr{B}{\rho}=\tfrac{\I_m}m,\ptr{A}{\rho}=\tfrac{\I_n}n}\\
&=& (\tfrac{\I_D}D+ \widetilde V_{m,n})\cap \rD(\cH).
\end{eqnarray*}
The real dimension of $\widetilde{V}_{m,n}$ is
$$
d := \dim_{\bbR} \widetilde{V}_{m,n} = (m^2-1)(n^2-1).
$$
Let $G=\op{PU}(m)\times \op{PU}(n)$  be the effective local unitary group. The decomposition in Eq.~\eqref{eq:decomp} is manifestly $G$-equivariant. Let $\pi:\Herm_1(\cH)\to V_{m,n}$ be the orthogonal affine projection that annihilates the local traceless components $\Herm_0(\bbC^m)\ot\I_n$ and $\I_m\ot\Herm_0(\bbC^n)$. Since $\pi(\bsU\rho\bsU^\dagger)=\bsU\pi(\rho)\bsU^\dagger$ for all $\bsU\in\rU(m)\times\rU(n)$, any $G$-invariant polynomial $f$ on $V_{m,n}$ extends to a full local-unitary invariant polynomial $F=f\circ\pi\in\cR_{m,n}$. Consequently, the restriction homomorphism
$$
\op{Res}_V:\cR_{m,n}\to \bbC[V_{m,n}]^G
$$
is surjective. Therefore, the transcendence degree of $\cR_{m,n}$ restricted to $V_{m,n}$ coincides with the transcendence degree of the invariant ring $\bbC[V_{m,n}]^G$.

\subsection{Upper bound for ordinary Bargmann invariants}

For any $\rho\in V_{m,n}$, the marginal operators are proportional to the identity:
$\bsX_0:=\rho,  \bsX_1:=\rho_A\ot\I_n = \tfrac{1}{m}\I_D, 
\bsX_2:=\I_m\ot \rho_B = \tfrac{1}{n}\I_D$. Because $\bsX_1$ and $\bsX_2$ are central in the full matrix algebra, they commute with $\bsX_0=\rho$. For any finite word $\mathbf{w}\in \set{0,1,2}^*$, let $N_k(\mathbf{w})$ denote the number of occurrences of the symbol $k$ in $\mathbf{w}$. Then the ordinary marginal-word Bargmann invariant is
\begin{eqnarray*}
B_{\mathbf{w}}(\rho) = m^{-N_1(\mathbf{w})} n^{-N_2(\mathbf{w})} \Tr{\rho^{N_0(\mathbf{w})}}.
\end{eqnarray*}

\begin{prop}
The restricted algebra $\cB_{m,n}|_{V_{m,n}}$ satisfies
\begin{eqnarray*}
\op{trdeg}\Pa{\cB_{m,n}|_{V_{m,n}}}\leqslant D-1=mn-1.
\end{eqnarray*}
\end{prop}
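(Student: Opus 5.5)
The plan is to show that the restricted algebra $\cB_{m,n}|_{V_{m,n}}$ is contained in a subalgebra generated by at most $D-1$ elements. The transcendence-degree bound then follows from the general fact that a $\bbC$-algebra generated by $r$ elements has transcendence degree at most $r$. The containment comes from the collapse formula just displayed: on $V_{m,n}$,
\begin{eqnarray*}
B_{\mathbf{w}}(\rho)=m^{-N_1(\mathbf{w})}n^{-N_2(\mathbf{w})}p_{N_0(\mathbf{w})}(\rho),
\end{eqnarray*}
where $p_k(\rho)=\Tr{\rho^k}$. So every generator of $\cB_{m,n}$ restricts to a scalar multiple of some power sum $p_k$ with $k\geqslant0$. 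Hence $\cB_{m,n}|_{V_{m,n}}\subseteq\bbC[p_0,p_1,p_2,\ldots]|_{V_{m,n}}$.

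The next step reduces the infinitely many power sums to finitely many. On $V_{m,n}$ we have $p_0=\Tr{\I_D}=D$ and $p_1=\Tr{\rho}=1$, both constants. For $k>D$, the Cayley--Hamilton theorem for the $D\times D$ matrix $\rho$ expresses $\rho^k$ as a polynomial in $\rho$ of degree at most $D$, whose coefficients are $\pm e_j(\rho)$. Newton's identities, recorded in Lemma~\ref{lem:2} and Appendix~\ref{app:A}, write each $e_j$ for $j\leqslant D$ as a polynomial in $p_1,\ldots,p_j$ with rational coefficients. Taking traces and inducting on $k$ therefore gives $p_k\in\bbQ[p_1,\ldots,p_D]$ for every $k$. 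With $p_1=1$, this yields
\begin{eqnarray*}
\cB_{m,n}|_{V_{m,n}}\subseteq\bbC[p_2,\ldots,p_D]|_{V_{m,n}}.
\end{eqnarray*}

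Finally, $\cB_{m,n}|_{V_{m,n}}$ is an integral domain, being a ring of polynomial functions on an irreducible affine space (we Zariski-close the slice to its affine span $\tfrac{\I_D}{D}+\widetilde V_{m,n}$ and complexify). Its fraction field is a subfield of $\bbC(p_2,\ldots,p_D)$, and the latter has transcendence degree at most $D-1$ because it is generated by $D-1$ elements. Transcendence degree is monotone under subfield inclusion, so $\op{trdeg}(\cB_{m,n}|_{V_{m,n}})\leqslant D-1=mn-1$.

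The only delicate point is the interpretation of \emph{restricted algebra}. $V_{m,n}$ is a semialgebraic set (it carries a positivity constraint), so restrictions have to be understood on its Zariski closure. Since $V_{m,n}$ contains a nonempty open subset of $\tfrac{\I_D}{D}+\widetilde V_{m,n}$ (a small ball around $\I_D/D$), polynomial identities on $V_{m,n}$ coincide with identities on the affine span. I expect this to be the only step needing care; the rest is Cayley--Hamilton and Newton's identities.
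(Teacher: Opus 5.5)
Your proposal is correct and follows the paper's own argument. On $V_{m,n}$ every $B_{\mathbf{w}}$ collapses to a scalar multiple of a global power sum; Cayley--Hamilton and Newton's identities then reduce all power sums to $p_1,\dots,p_D$, and $p_1=1$ leaves at most $D-1$ generators. Your additional steps make explicit what the paper leaves implicit: the induction on $k$ via $p_k=\sum_{j=1}^{D}(-1)^{j-1}e_j\,p_{k-j}$ for $k>D$, and the integral-domain justification using the open ball around $\I_D/D$ in the affine span.
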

\begin{proof}
All generators of $\cB_{m,n}|_{V_{m,n}}$ are scalar multiples of the global spectral moments $p_k(\rho)=\Tr{\rho^k}$. By the Cayley-Hamilton theorem and Newton's identities, the characteristic polynomial of any $D\times D$ matrix is determined by its first $D$ moments. Since $p_1(\rho)=\Tr{\rho}=1$ is constant on $V_{m,n}$, at most $D-1$ moments $\set{p_k(\rho)}^D_{k=2}$ are algebraically independent. Thus, the field of fractions of  $\cB_{m,n}|_{V_{m,n}}$ has transcendence degree at most $D-1$.
\end{proof}

\subsection{Lower bound for the full invariant ring}

The group $G := \op{PU}(m)\times \op{PU}(n)$ is a compact Lie group of real dimension
\begin{eqnarray*}
g := \dim_{\bbR} G = (m^2-1)+(n^2-1) = m^2+n^2-2.
\end{eqnarray*}

\begin{prop}\label{lem:orbitdim}
 The transcendence degree of the full invariant ring restricted to $V_{m,n}$ satisfies
\begin{eqnarray*}
\op{trdeg}\Pa{\cR_{m,n}|_{V_{m,n}}}\geqslant d-g.
\end{eqnarray*}
\end{prop}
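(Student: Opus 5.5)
The plan is to reduce the statement to the classical fact that polynomial invariants of a compact group separate orbits, and then to count dimensions. By the surjectivity of $\op{Res}_V$ established above, it suffices to bound $\op{trdeg}\,\bbC[V_{m,n}]^G$ from below. Since $V_{m,n}$ contains an open neighbourhood of $\tfrac{\I_D}D$ in the affine space $\tfrac{\I_D}D+\widetilde V_{m,n}$, polynomial functions on $V_{m,n}$ are the same as polynomial functions on $\widetilde V_{m,n}$ (Zariski density). So I would work with the linear $G$-representation $\widetilde V_{m,n}$, i.e. the tensor product of the two adjoint representations. I would complexify it, $W=\widetilde V_{m,n}\ot_\bbR\bbC$, with the reductive group $G_\bbC=\op{PGL}(m)\times\op{PGL}(n)$. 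Real $G$-invariant polynomials on $\widetilde V_{m,n}$ then correspond to $G_\bbC$-invariants on $W$, because $G$ is Zariski dense in $G_\bbC$.

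\textbf{Step 1: orbit separation.} Invariant polynomials of the compact group $G$ separate $G$-orbits in $\widetilde V_{m,n}$. This follows from Stone–Weierstrass applied to the Hilbert–Schmidt-averaged polynomials (the standard argument cited in the Introduction). Hence the invariant map $\Phi=(f_1,\dots,f_r):\widetilde V_{m,n}\to\bbR^r$, built from a finite generating set of $\bbR[\widetilde V_{m,n}]^G$ (finite by Hilbert), has fibres that are exactly the $G$-orbits.

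\textbf{Step 2: dimension count.} Every orbit is the image of the $g$-dimensional manifold $G$, so it has dimension at most $g$. Choose a point $x$ where $\rank d\Phi$ attains its generic (maximal) value $k$; this is an open dense condition. By the constant rank theorem, near $x$ the fibre $\Phi^{-1}(\Phi(x))$ is a submanifold of dimension $d-k$. That fibre is a single orbit, so $d-k\leqslant g$, which gives $k\geqslant d-g$. Now pick $k$ of the $f_i$ whose differentials are independent at $x$. By the Jacobian criterion these $k$ polynomials are algebraically independent over $\bbR$, and hence over $\bbC$, since a complex relation splits into real and imaginary parts with real coefficients. Therefore $\op{trdeg}_\bbC\,\bbC[V_{m,n}]^G\geqslant k\geqslant d-g$.

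\textbf{Main obstacle.} The delicate point is Step 2's use of orbit separation: the fibre is a closed semialgebraic set, and I must make sure its local dimension at the generic point equals $d-k$. I would handle this with the constant rank theorem on the open set where the rank is maximal. This set is nonempty and open because the rank is lower semicontinuous and takes its maximum somewhere. Restricted to that set, the fibre is locally a manifold contained in one orbit, and an orbit is an immersed manifold of dimension at most $g$. This gives $d-k\leqslant g$ without any stabilizer analysis. Note that the crude bound "orbit dimension $\leqslant g$" is all that is needed; no claim about generic stabilizers is required.
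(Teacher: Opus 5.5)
Your proof is correct. It uses the same dimension count as the paper: orbits have dimension at most $g$, and the problem reduces to $\bbC[V_{m,n}]^G$ via the surjectivity of $\op{Res}_V$. The difference is that you prove the key step instead of citing it.

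The paper appeals to ``classical geometric invariant theory (or the slice theorem)''. From that it takes the equality between the transcendence degree of the field of invariant rational functions and the generic orbit-space dimension $d-\max_\rho\dim_{\bbR}\cO_\rho$, and then complexifies.

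You derive only the needed inequality, from three elementary facts:
\begin{itemize}
\item Invariant polynomials of the compact group separate orbits, by Haar averaging and Stone--Weierstrass.
\item The constant rank theorem, applied at a point of maximal rank $k$ of the invariant map $\Phi$, gives a local $(d-k)$-dimensional fibre contained in a single orbit, so $d-k\leqslant g$.
\item The Jacobian criterion turns $k$ independent differentials into $k$ algebraically independent invariants.
\end{itemize}

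Your route has two advantages. First, the algebraically independent elements you produce are genuine polynomial invariants. You therefore never compare $\op{Frac}(\bbC[V_{m,n}]^G)$ with the field of rational invariants, a point the paper's wording passes over. That comparison is harmless here because $\op{PGL}(m)\times\op{PGL}(n)$ is semisimple, but it is not automatic for reductive groups in general. Second, you need no principal-orbit or generic-stabilizer input.

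The paper's citation gives the sharper equality, but only the lower bound is used in Theorem~\ref{th:8}. The complexification to $\op{PGL}(m)\times\op{PGL}(n)$ in your opening paragraph is harmless but never used in your argument.
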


\begin{proof}
The translation space $\widetilde V_{m,n}$ is a smooth Euclidean manifold of dimension $d$. Under the smooth adjoint action of the compact Lie group $G$, the orbit $\cO_\rho=\Set{\bsU\rho\bsU^\dagger:\bsU\in G}$ through any point $\rho\in V_{m,n}$, has dimension
$$
\dim_{\bbR}(\cO_\rho)\leqslant \dim_{\bbR}(G)=g.
$$
By classical geometric invariant theory (or the slice theorem for compact transformation groups), the field of invariant rational functions $\bbR[V_{m,n}]^G$ separates generic $G$-orbits. Its transcendence degree over $\bbR$ equals the dimension of the generic orbit space (topological quotient):
\begin{eqnarray*}
\op{trdeg}_{\bbR}\bbR[V_{m,n}]^G&=&\dim_{\bbR}(V_{m,n})-\max_{\rho\in V_{m,n}}\dim_{\bbR}(\cO_\rho)\\
&\geqslant& d-g.
\end{eqnarray*}
Upon complexification, 
$$
\op{trdeg}_{\bbC}\bbC[V_{m,n}]^G=\op{trdeg}_{\bbR}\bbR[V_{m,n}]^G\geqslant d-g.
$$
By the surjectivity established previously, the claim follows.
\end{proof}

\subsection{Gap comparison and proof of proper inclusion}

\begin{prop}
Define $Q_{m,n} :=(d-g) - (mn-1)$. For all integers $m,n\geqslant2$,
\begin{eqnarray*}
Q_{m,n} = (m^2-2)(n^2-2) - mn.
\end{eqnarray*}
Moreover, $Q_{m,n}=0$ if and only if $(m,n)=(2,2)$, and $Q_{m,n}>0$ whenever $\max(m,n)\geqslant 3$.
\end{prop}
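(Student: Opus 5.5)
The plan is to prove the closed form for $Q_{m,n}$ by direct expansion, and then to obtain the sign statement from a factor-by-factor comparison of $(m^2-2)(n^2-2)$ with $mn$. No earlier result of the paper is needed beyond the definitions $d=(m^2-1)(n^2-1)$ and $g=m^2+n^2-2$.

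\textbf{Step 1 (identity).} Substitute these definitions into $Q_{m,n}=(d-g)-(mn-1)$ and expand:
\begin{eqnarray*}
Q_{m,n}&=&m^2n^2-m^2-n^2+1-m^2-n^2+2-mn+1\\
&=&m^2n^2-2m^2-2n^2+4-mn.
\end{eqnarray*}
Since $(m^2-2)(n^2-2)=m^2n^2-2m^2-2n^2+4$, this gives $Q_{m,n}=(m^2-2)(n^2-2)-mn$.

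\textbf{Step 2 (one-variable inequality).} For an integer $k\geqslant2$ we have
$$
k^2-2-k=(k-2)(k+1)\geqslant0,
$$
with equality if and only if $k=2$. Hence $m^2-2\geqslant m>0$ and $n^2-2\geqslant n>0$.

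\textbf{Step 3 (sign of $Q_{m,n}$).} All four quantities in Step 2 are positive, so the two inequalities can be multiplied to give $(m^2-2)(n^2-2)\geqslant mn$, that is, $Q_{m,n}\geqslant0$.

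Equality in the product requires equality in each factor. Indeed, if, say, $m^2-2>m$, then
$$
(m^2-2)(n^2-2)\geqslant(m^2-2)\,n>mn,
$$
because $n>0$. Therefore $Q_{m,n}=0$ exactly when $m=n=2$, and $Q_{m,n}>0$ as soon as $\max(m,n)\geqslant3$. As a check, $Q_{2,3}=2\cdot7-6=8$ and $Q_{3,3}=49-9=40$, matching the values quoted in the proof of Theorem~\ref{th:8}.

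\textbf{Expected obstacle.} Step 1 is purely mechanical, and Step 2 is the only point requiring care. The factorization $(k-2)(k+1)$ handles it directly: it gives both the inequality and the exact equality case. That equality case is what isolates $(m,n)=(2,2)$ as the unique case with zero gap.
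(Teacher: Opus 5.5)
Your proposal is correct and follows essentially the same route as the paper: the same direct expansion, followed by the factor-wise comparison $k^2-2\geqslant k$, with equality only at $k=2$. The only difference is organizational: the paper splits into the case $m=2$ (via $Q_{2,n}=2(n-2)(n+1)$, with the case $n=2$ following by symmetry) and the case $m,n\geqslant3$, while your factorization $(k-2)(k+1)$ and equality analysis cover all cases at once.
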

\begin{proof}
Expanding the dimensions yields
\begin{eqnarray*}
Q_{m,n}&=& (m^2-1)(n^2-1) - (m^2+n^2-2) - mn + 1 \\
&&= m^2n^2 - 2m^2 - 2n^2 - mn + 4 \\
&&= (m^2-2)(n^2-2) - mn.
\end{eqnarray*}
If $m=2$, then $Q_{2,n}=2(n-2)(n+1)$, which equals $0$ for $n=2$ and is strictly positive for all $n\geqslant3$. By symmetry, $Q_{m,2}>0$ for $m\geqslant3$. For $m,n\geqslant3$, we have $m^2-2>m$ and $n^2-2>n$, so $(m^2-2)(n^2-2)>mn$, ensuring $Q_{m,n}>0$.
\end{proof}
Combining them above: for any $(m,n)\neq(2,2)$,
\begin{eqnarray*}
\op{trdeg}\Pa{\cR_{m,n}|_{V_{m,n}}} - \op{trdeg}\Pa{\cB{m,n}|_{V_{m,n}}} \geqslant Q_{m,n} > 0.
\end{eqnarray*}

\subsection{Geometric consequence: isospectral fibers}

The surplus dimension $Q_{m,n}>0$ directly implies that the fiber of the ordinary Bargmann map contains continuous families of inequivalent local-unitary orbits.

Let $s:V_{m,n}\to \bbR^{D-1}$ be the spectral map
$$
\rho\mapsto (p_2(\rho),\ldots,p_D(\rho)).
$$
By Sard’s theorem, a generic regular fiber $\cF_\lambda=s^{-1}(\lambda)$ is a smooth submanifold of $V_{m,n}$ with dimension
\begin{eqnarray*}
\dim_{\bbR}(\cF_\lambda)\geqslant d-(D-1).
\end{eqnarray*}
Since any local-unitary orbit $\cO$ inside $\cF_\lambda$ satisfies $\dim_{\bbR}(\cO)\leqslant g$, the quotient of the isospectral fiber by local-unitary equivalence satisfies
\begin{eqnarray*}
\dim_{\bbR}\Pa{\cF_\lambda/G}\geqslant [d-(D-1)]-g=Q_{m,n}.
\end{eqnarray*}
Whenever $Q_{m,n}>0$, each generic isospectral slice in $V_{m,n}$ contains a $Q_{m,n}$-dimensional continuum of mutually LU-inequivalent states. Because every state in $\cF_\lambda$ shares the exact same global spectrum and maximally mixed marginals, all ordinary marginal-word Bargmann invariants take identical values across the entire family.

Thus, the failure of $\cB_{m,n}$ is geometric: the invariants $B_{\mathbf{w}}$ record only the eigenvalue spectrum of $\rho$ but entirely discard the relative orientation of its eigenspaces with respect to the tensor-product factorisation $\cH_A\ot\cH_B$.




\begin{thebibliography}{99}

\bibitem{Grassl1998}
M. Grassl, M. R\"{o}tteler, T. Beth, Computing local invariants of quantum-bit systems, \pra \href{https://doi.org/10.1103/PhysRevA.58.1833}{{\bf58}, 1833 (1998).}

\bibitem{Makhlin2002}
Y. Makhlin, Nonlocal properties of two-qubit gates and mixed
states, and the optimization of quantum computations, Quantum
Inf. Process. \href{https://doi.org/10.1023/A:1022144002391}{{\bf1}, 243 (2002).}

\bibitem{Kraus2009}
B. Kraus, Local unitary equivalence of multipartite pure states, \prl \href{https://doi.org/10.1103/PhysRevLett.104.020504}{{\bf104}, 020504 (2009).}

\bibitem{Zhou2012}
C. Zhou, T. Zhang, S-M. Fei, N. Jing, and X. Li-Jost, Local unitary equivalence of arbitrary dimensional bipartite mixed quantum states, \pra~\href{https://doi.org/10.1103/10.1103/PhysRevA.86.010303}{{\bf86}, 010303(2012).}

\bibitem{Gour2013}
G. Gour and N.R. Wallach, Classification of multipartite entanglement of all finite dimensionality, \prl \href{https://doi.org/10.1103/PhysRevLett.111.060502}{{\bf111}, 060502 (2013).}

\bibitem{Turner2017}
J. Turner, and J. Morton, A somplete set of invariants for LU-equivalence of density operators, SIGMA \href{https://doi.org/10.3842/SIGMA.2017.028}{{\bf13}, 28-20 (2017).}

\bibitem{Maciazek2013}
T. Maci\c{a}\.{z}ek; M. Oszmaniec; A. Sawicki, How many invariant polynomials are needed to decide local unitary equivalence of qubit states? \jmp~\href{https://doi.org/10.1063/1.4819499}{{\bf54}, 092201 (2013).}

\bibitem{Viehmann2011}
O. Viehmann, C. Eltschka, J. Siewert, Polynomial invariants for discrimination and classification of four-qubit entanglement, \pra \href{https://doi.org/10.1103/PhysRevA.83.052330}{{\bf83}, 052330 (2011).}

\bibitem{Monras2011}
A. Monras, G. Adesso, S.M. Giampaolo, G. Gualdi, G.B. Davies, and F. Illuminati, Entanglement quantification by local unitary operations, \pra~\href{https://doi.org/10.1103/PhysRevA.84.012301}{{\bf84}, 012301 (2011).}

\bibitem{Wyderka2023}
N. Wyderka, A. Ketterer, S. Imai, J.L. B\"{o}nsel, D.E. Jones, B.T. Kirby, X-D. Yu, and O. G\"{u}hne, Complete characterization of quantum correlations by randomized measurements, \prl~\href{https://doi.org/10.1103/PhysRevLett.131.090201}{{\bf131}, 090201 (2023).}

\bibitem{Oszmaniec2024}
M. Oszmaniec, D.J. Brod and E.F. Galv\~{a}o, Measuring relational
information between quantum states, and applications, \njp
\href{https://doi.org/10.1088/1367-2630/ad1a27}{{\bf26}, 013053
(2024).}

\bibitem{Quek2024}
Y. Quek, E. Kaur, and M.M. Wilde, Multivariate trace estimation in
constant quantum depth, Quantum
\href{https://doi.org/10.22331/q-2024-01-10-1220}{{\bf8}, 1220
(2024).}

\bibitem{Procesi1976}
C. Procesi, The invariant theory of $n\times n$ matrices, Adv.Math.
\href{https://doi.org/10.1016/0001-8708(76)90027-X}{{\bf19}, 306
(1976).}

\bibitem{Bargmann1964}
V. Bargmann, Note on Wigner's theorem on symmetry operations, \jmp~\href{https://doi.org/10.1063/1.1704188}{{\bf5}, 862 (1964).}

\bibitem{Chien2016}
T.Y. Chien and S. Waldron, A characterization of projective unitary
equivalence of finite frames and applications, SIAM J. Discrete
Math. \href{https://doi.org/10.1137/15M1042140}{{\bf30}, 976
(2016).}

\bibitem{Mukunda2003}
N. Mukunda, P.K. Aravind, and R. Simon, Wigner rotations, Bargmann
invariants and geometric phases, \jpa: Math. Gen.
\href{https://doi.org/10.1088/0305-4470/36/9/312}{{\bf36}, 2347
(2003).}

\bibitem{Berry1984}
M.V. Berry, Quantal Phase Factors Accompanying Adiabatic Changes,
Proc. R. Soc. A
\href{https://doi.org/10.1098/rspa.1984.0023}{\textbf{392}, 45
(1984).}

\bibitem{Aharonov1987}
Y. Aharonov and J. Anandan, Phase change during a cyclic quantum
evolution, \prl
\href{https://doi.org/10.1103/PhysRevLett.58.1593}{\textbf{58}, 1593
(1987).}

\bibitem{Kirkwood1933}
J.G. Kirkwood, Quantum statistics of almost classical assemblies,
\href{https://doi.org/10.1103/PhysRev.44.31}{{\bf44}, 31(1933).}

\bibitem{Dirac1945}
P.A.M. Dirac, On the analogy between classical and quantum
mechanics, \rmp
\href{https://doi.org/10.1103/RevModPhys.17.195}{{\bf17},
195(1945).}

\bibitem{Schmid2024}
D. Schmid, R. D. Baldij\~{a}o, Y. Y\={\i}ng, R. Wagner, and J. H.
Selby, Kirkwood-Dirac representations beyond quantum states and
their relation to noncontextuality, \pra
\href{https://doi.org/10.1103/PhysRevA.110.052206}{\textbf{110},
052206 (2024).}

\bibitem{Buhrman2001}
H. Buhrman, R. Cleve, J. Watrous, and R. de Wolf, Quantum
fingerprinting, \prl \href{https://doi.org/10.1103/PhysRevLett.87.167902}{{\bf87}, 167902 (2001).}

\bibitem{Fernandes2024}
C. Fernandes, R. Wagner, L. Novo, and E.F. Galv\~{a}o,
Unitary-Invariant Witnesses of Quantum Imaginary, \prl
\href{https://doi.org/10.1103/PhysRevLett.133.190201}{{\bf133},
190201 (2024).}

\bibitem{Li2025}
M-S. Li, Y. Tan, Bargmann invariants for quantum imaginarity,
\href{https://doi.org/10.1103/PhysRevA.111.022409}{{\bf111}, 022409 (2025).}

\bibitem{Li2026}
M-S. Li, R. Wagner, and L. Zhang, Multi-state imaginarity and coherence in qubit systems,
\pra \href{https://doi.org/10.1103/10.1103/tpgw-v6ht}{{\bf113}, 012428 (2026).}

\bibitem{Wagner2025}
R. Wagner, Coherence and contextuality as quantum resources, Ph.D. Thesis
\href{http://arxiv.org/abs/2511.16785}{arXiv:2511.16785}

\bibitem{Zhang2025PRA1}
L. Zhang, B. Xie, and B. Li, Geometry of sets of Bargmann
invariants, \pra
\href{https://doi.org/10.1103/PhysRevA.111.042417}{{\bf111}, 042417
(2025).}

\bibitem{Zhang2025}
L. Zhang, B. Xie, and Y. Tao, Bargmann-invariant framework for local
unitary equivalence and entanglement, \pra~\href{https://doi.org/10.1103/s3mp-3kn6}{{\bf112}, 052426 (2025).}

\bibitem{Pratapsi2025}
S.S. Pratapsi, J. Gouveia, L. Novo, and E.F. Galv\~{a}o, Elementary
characterization of Bargmann invariants, \pra
\href{https://doi.org/10.1103/hsnv-wpt3}{{\bf112}, 042421 (2025).}

\bibitem{Xu2026}
J. Xu, Numerical ranges of Bargmann invariants,  \pla
\href{https://doi.org/10.1016/j.physleta.2025.131091}{{\bf565},
131091 (2026).}

\bibitem{Zhang2026}
L. Zhang and B. Xie, A Survey of Bargmann Invariants: Geometric Foundations and Applications, \href{http://arxiv.org/abs/2601.01858}{arXiv: 2601.01858}

\bibitem{Wagner2026a}
R. Wagner, Bargmann Scenarios, \href{https://arxiv.org/pdf/2604.18833}{arXiv:2604.18833}

\bibitem{Wagner2026b}
R. Wagner, E.F. Galv\~{a}o, Commutativity from a single Bargmann invariant equality, \href{https://arxiv.org/pdf/2605.07405}{arXiv:2605.07405}

\bibitem{Wang2026}
Y. Wang, A low order Bargmann invariant hierarchy for set coherence, \href{https://arxiv.org/abs/2605.10003}{arXiv:2605.10003}

\bibitem{Peres1996}
A. Peres, Separability Criterion for Density Matrices, \prl \href{https://doi.org/10.1103/PhysRevLett.77.1413}{{\bf77}, 1413(1996).}

\bibitem{Horodecki1996}
M. Horodecki, P. Horodecki, R. Horodecki, Separability of mixed states: necessary and sufficient conditions,  \pla \href{https://doi.org/10.1016/S0375-9601(96)00706-2}{{\bf223}(1-2), 1-8 (1996).}

\bibitem{Horn2012}
R.A. Horn and C.R. Johnson, Matrix Analysis, Cambridge University Press (2012).

\bibitem{Parthasarathy2005}
K.R. Parthasarathy, Extremal quantum states in coupled systems, Annales de l'I.H.P. Probabilités et statistiques, Tome \href{https://doi.org/10.1016/j.anihpb.2003.10.009}{{\bf41}(3), 257-268 (2005).}

\bibitem{Cerf1999}
N.J. Cerf, C. Adami, and R.M. Gingrich, Reduction criterion for separability, \pra \href{https://doi.org/10.1103/PhysRevA.60.898}{{\bf60},898 (1999).}

\bibitem{Horodecki1999}
M. Horodecki, P. Horodecki, Reduction criterion of separability and limits for a class of distillation protocols, \pra \href{https://doi.org/10.1103/PhysRevA.59.4206}{{\bf59}, 4206(1999).}

\bibitem{Gerdt2011}
V. Gerdt, D. Mladenov, Y. Palii, and A. Khvedelidze, SU(6) Casimir invariants and SU(2)$\ot$SU(3) scalars for a mixed qubit-qutrit state, J. Math. Sci. \href{https://doi.org/10.1007/s10958-011-0619-9}{{\bf179}, 690–701 (2011).} 

\bibitem{Macdonald1995}
I.G. Macdonald, Symmetric Functions and Hall Polynomials, 2nd Ed.,
Oxford University Press: Oxford, UK (1995).

\bibitem{Horodecki2000}
P. Horodecki, M. Lewenstein, G. Vidal, and I. Cirac, Operational criterion and constructive checks for the separability of low-rank density matrices,  \pra \href{https://doi.org/10.1103/PhysRevA.62.032310}{{\bf62}(3), 032310 (2000).}

\bibitem{Ma2026}
M. Ma and R. Shi, Bargmann invariants and local unitary equivalence, \href{http://arxiv.org/abs/2607.16878}{arXiv:2607.16878}

\bibitem{Mumford1994}
D. Mumford, J. Fogarty, and F. Kirwan, Geometric Invariant Theory, Springer-Verlag Berlin Heidelberg (1994).

\end{thebibliography}
\end{document}